\newtheorem{theorem}{Theorem}
\newtheorem{lemma}{Lemma}
\newtheorem{claim}{Claim}
\newenvironment{proof}{\noindent {\em Proof:}}{\\\hspace*{\fill}\mbox{$\diamond$}}
\newcommand{\nfrac}{\nicefrac}
\newcommand{\sdp}{{\sf SDP}}
\newcommand{\defeq}{\stackrel{\textup{def}}{=}}
\newcommand{\vol}{\mathrm{vol}}
\begin{document}


\title{A Local Spectral Method for Graphs: \\ 
with Applications to Improving Graph Partitions \\
and Exploring Data Graphs Locally}

\author{
Michael W. Mahoney
\thanks{
Department of Mathematics,
Stanford University,
Stanford, CA 94305.
{\tt mmahoney@cs.stanford.edu}.
}
\and
Lorenzo Orecchia
\thanks{
Computer Science Division, 
UC Berkeley,
Berkeley, CA, 94720.
{\tt orecchia@eecs.berkeley.edu}.
}
\and 
Nisheeth K. Vishnoi
\thanks{
Microsoft Research,
Bangalore, India.
{\tt nisheeth.vishnoi@gmail.com}.
}
}

\date{}
\maketitle


\vspace{3mm}
\begin{abstract}
The second eigenvalue of the Laplacian matrix and its associated eigenvector 
are fundamental features of an undirected graph, and as such they have found 
widespread use in scientific computing, machine learning, and data analysis.  
In many applications, however, graphs that arise have several \emph{local} 
regions of interest, and the second eigenvector will typically fail to 
provide information fine-tuned to each local region.  
In this paper, we introduce a locally-biased analogue of the second 
eigenvector, and we demonstrate its usefulness at highlighting local 
properties of data graphs in a semi-supervised manner.
To do so, we first view the second eigenvector as the solution to a 
constrained optimization problem, and we incorporate the local information 
as an additional constraint; 
we then characterize the optimal solution to this new problem and show that 
it can be interpreted as a generalization of a Personalized PageRank vector; 
and finally, as a consequence, we show that the solution can be computed in 
nearly-linear time. 
In addition, we show that this locally-biased vector can be used to compute 
an approximation to the best partition \emph{near} an input seed set in a 
manner analogous to the way in which the second eigenvector of the Laplacian 
can be used to obtain an approximation to the best partition in the entire 
input graph.
Such a primitive is useful for identifying and refining clusters locally, as 
it allows us to focus on a local region of interest in a semi-supervised 
manner.
Finally, we provide a detailed empirical evaluation of our method by showing 
how it can applied to finding locally-biased sparse cuts around an input 
vertex seed set in social and information networks.
\end{abstract}
\vspace{3mm}

\section{Introduction}
\label{sxn:intro}

Spectral methods are popular in machine learning, data analysis, and applied 
mathematics due to  their strong underlying theory and their good 
performance in a wide range of applications.
In the study of undirected graphs, in particular, spectral techniques play 
an important role, as many fundamental structural properties of a graph 
depend directly on spectral quantities associated with matrices representing 
the graph.
Two fundamental objects of study in this area are the second smallest 
eigenvalue of the graph Laplacian and its associated eigenvector.
These quantities determine many features of the graph, including the 
behavior of random walks and the presence of sparse cuts.
This relationship between the graph structure and an easily-computable 
quantity has been exploited in data clustering, community detection, image 
segmentation, parallel computing, and many other applications.

A potential drawback of using the second eigenvalue and its associated 
eigenvector is that they are inherently \emph{global} quantities, and thus 
they may not be sensitive to very \emph{local} information. 
For instance, a sparse cut in a graph may be poorly correlated with the 
second eigenvector (and even with all the eigenvectors of the Laplacian) 
and thus invisible to a method based only on eigenvector analysis. 
Similarly, based on domain knowledge one might have information about a 
specific target region in the graph, in which case one might be interested 
in finding clusters only near this prespecified local region, \emph{e.g.}, 
in a semi-supervised manner; but this local region might be essentially 
invisible to a method that uses only global eigenvectors.
For these and related reasons, standard global spectral techniques can have 
substantial difficulties in semi-supervised settings, where the goal is to 
learn more about a locally-biased target region of the graph.

In this paper, we provide a methodology to construct a locally-biased 
analogue of the second eigenvalue and its associated eigenvector, and we 
demonstrate both theoretically and empirically that this localized vector 
inherits many of the good properties of the global second eigenvector. 
Our approach is inspired by viewing the second eigenvector as the optimum of 
a constrained global quadratic optimization program. 
To model the localization step, we modify this program by adding a natural 
locality constraint.
This locality constraint requires that any feasible solution have sufficient 
correlation with the target region, which we assume is given as input in 
the form of a set of nodes or a distribution over vertices. 
The resulting optimization problem, which we name \textsf{LocalSpectral} and 
which is displayed in Figure~\ref{fig:spectral},
is the main object of our work. 

The main advantage of our formulation is that an optimal solution to 
\textsf{LocalSpectral} captures many of the same structural properties as 
the global eigenvector, except in a locally-biased setting. 
For example, as with the global optimization program, our locally-biased 
optimization program has an intuitive geometric interpretation.
Similarly, as with the global eigenvector, an optimal solution to 
\textsf{LocalSpectral} is efficiently computable. 
To show this, we characterize the optimal solutions of 
\textsf{LocalSpectral} and show that such a solution can be constructed in 
nearly-linear time by solving a system of linear equations.
In applications where the eigenvectors of the graph are pre-computed and 
only a small number of them are needed to describe the data, the optimal 
solution to our program can be obtained by performing a small number of 
inner product computations. 
Finally, the optimal solution to \textsf{LocalSpectral} can be used to 
derive bounds on the mixing time of random walks that start near the local 
target region as well as on the existence of sparse cuts near the 
locally-biased target region. 
In particular, it lower bounds the conductance of cuts as a function of how 
well-correlated they are with the seed vector.
This will allow us to exploit the analogy between global eigenvectors and 
our localized analogue to design an algorithm for discovering sparse cuts 
near an input seed set of vertices. 

In order to illustrate the empirical behavior of our method, we will 
describe its performance on the problem of finding locally-biased sparse 
cuts in real data graphs.
Subsequent to the dissemination of the initial technical report version of 
this paper, our methodology was applied to the problem of finding, given a 
small number of ``ground truth'' labels that correspond to known segments 
in an image, the segments in which those labels reside~\cite{MVM11}.
This computer vision application will be discussed briefly.
Then, we will describe in detail how our algorithm for discovering sparse 
cuts near an input seed set of vertices may be applied to the problem of 
exploring data graphs locally and to identifying locally-biased clusters and 
communities in a more difficult-to-visualize social network application.
In addition to illustrating the performance of the method in a practical 
application related to the one that initially motivated 
this work~\cite{LLDM08_communities_CONF,LLDM09_communities_IM,LLM10_communities_CONF}, 
this social graph application will illustrate how the various ``knobs'' of our method can 
be used in practice to explore the structure of data graphs in a locally-biased 
manner.

Recent theoretical work has focused on using spectral ideas to find good 
clusters nearby an input seed set of 
nodes~\cite{Spielman:2004,andersen06local,chung07_fourproofs}. 
These methods are based on running a number of local random walks around the 
seed set and using the resulting distributions to extract information about 
clusters in the graph.
Recent empirical work has used Personalized PageRank, a particular variant 
of a local random walk, to characterize very finely the clustering and 
community structure in a wide range of very large social and information 
networks~\cite{andersen06seed,LLDM08_communities_CONF,LLDM09_communities_IM,LLM10_communities_CONF}. 
In contrast with previous methods, our local spectral method is the first to 
be derived in a direct way from an explicit optimization problem inspired 
by the global spectral problem.  
Interestingly, our characterization also shows that optimal solutions to 
\textsf{LocalSpectral} are  generalizations of Personalized PageRank, 
providing an additional insight to why local random walk methods work well 
in practice.

In the next section, we will describe relevant background and notation; and 
then, in Section~\ref{sxn:optimize}, we will present our formulation of a 
locally-biased spectral optimization program, the solution of which will 
provide a locally-biased analogue of the second eigenvector of the graph 
Laplacian. 
Then, in Section~\ref{sxn:partition} we will describe how our method may be 
applied to identifying and refining locally-biased partitions in a graph; and
in Section~\ref{sxn:empirical} we will provide a detailed empirical 
evaluation of our algorithm.
Finally, in Section~\ref{sxn:discussion}, we will conclude with a discussion 
of our results in a broader~context.

\section{Background and Notation.} 
\label{sxn:background}

Let $G=(V,E,w)$ be a connected undirected graph with $n=|V|$ vertices and 
$m=|E|$ edges, in which edge $\{i,j\}$ has weight $w_{ij}.$
For a set of vertices $S \subseteq V$ in a graph, the \emph{volume of $S$} 
is $\vol(S) \defeq \sum_{i \in S}d_i$, in which case the \emph{volume of the
graph $G$} is $\vol(G) \defeq \vol(V)=2m$. 
In the following, $A_G \in \mathbb{R}^{V \times V}$ will denote the 
adjacency matrix of $G$, while $D_G \in \mathbb{R}^{V \times V}$ will denote 
the diagonal degree matrix of $G$, \emph{i.e.},  
$D_G(i,i)=d_i = \sum_{\{i,j\} \in E} w_{ij}$, the weighted degree of vertex $i$.
The Laplacian of $G$ is defined as $L_G \defeq D_G-A_G$. 
(This is also called the combinatorial Laplacian, in which case the
normalized Laplacian of $G$ is $\mathcal{L}_G\defeq D_G^{-1/2}L_GD_G^{-1/2}$.)

The Laplacian is the symmetric matrix having quadratic form 
$x^T L_G x = \sum_{ij \in E} w_{ij} (x_i - x_j)^2$, for $x \in  \mathbb{R}^V$. 
This implies that $L_G$ is positive semidefinite and that the all-one vector 
$1 \in  \mathbb{R}^V$ is the eigenvector corresponding to the smallest 
eigenvalue $0$.
For a symmetric matrix $A$, we will use $A \succeq 0$ to denote that it is 
positive semi-definite.
Moreover, given two symmetric matrices $A$ and $B$, the expression 
$A \succeq B$ will mean $A - B \succeq 0$.
Further, for two $n\times n$ matrices $A$ and $B$, we let $A \circ B$ 
denote ${\rm Tr}\:(A^TB)$. 
Finally, for a matrix $A,$ let $A^{+}$ denote its (uniquely defined) 
Moore-Penrose pseudoinverse.

For two vectors $x,y \in \mathbb{R}^{n}$, and the degree matrix $D_G$ for a 
graph $G$, we define the degree-weighted inner product as
$ x^T D_G y \defeq  \sum_{i=1}^{n} x_{i}y_{i}d_{i}$.
Given a subset of vertices $S \subseteq V$, we denote by $1_S$ the indicator 
vector of $S$ in $\mathbb{R}^V$ and by $1$ the vector in $\mathbb{R}^V$ 
having all entries set equal to $1$.
We consider the following definition of the complete graph $K_n$ on the 
vertex set $V$: $ A_{K_n} \defeq  \frac{1}{\vol(G)} D_G 11^T D_G$.
Note that this is not the standard complete graph, but a weighted version of 
it, where the weights depend on $D_G$.
With this scaling we have $D_{K_n} = D_G$.
Hence, the Laplacian of the complete graph defined in this manner becomes
$ L_{K_n} = D_G - \frac{1}{\vol(G)} D_G 11^T D_G $.

In this paper, the \emph{conductance $\phi(S)$ of a cut $(S,\bar{S})$} is 
$\phi(S) \defeq \vol(G) \cdot \frac{|E(S,\bar{S})|}{\vol(S)\cdot\vol(\bar{S})}$.
A sparse cut, also called a good-conductance partition, is one for which 
$\phi(S)$ is small.
The \emph{conductance of the graph $G$} is then
$\phi(G)=\min_{S \subseteq V} \phi(S)$.
Note that the conductance of a set $S$, or equivalently a cut $(S,\bar{S})$, is often 
defined as $\phi^\prime(S)=|E(S,\bar{S})|/\min\{\vol(S),\vol(\bar{S})\}$. 
This notion is equivalent to that $\phi(S)$, in that the value 
$\phi(G)$ thereby obtained for the conductance of the graph $G$ differs by 
no more than a factor of $2$ times the constant $\vol(G)$, depending on 
which notion we use for the conductance of a~set.  

\section{The \textsf{LocalSpectral} Optimization Program}
\label{sxn:optimize}

In this section, we introduce the local spectral optimization program 
$\textsf{LocalSpectral}(G, s, \kappa)$ as a strengthening of the usual
global spectral program $\mathsf{Spectral}(G)$.
To do so, we will augment $\mathsf{Spectral}(G)$ with a locality constraint
of the form $(x^TD_Gs)^2\geq\kappa$, for a seed vector $s$ and a correlation
parameter $\kappa$.
Both these programs are homogeneous quadratic programs, with optimization 
variable the vector $x \in \mathbb{R}^V$, and thus any solution vector $x$ 
is essentially equivalent to $-x$ for the purpose of these optimizations. 
Hence, in the following we do not differentiate between $x$ and $-x$, and we 
assume a suitable direction is chosen in each instance.

\subsection{Motivation for the Program}
\label{sxn:optimize-program}

Recall that the second eigenvalue $\lambda_2(G)$ of the Laplacian $L_G$ can 
be viewed as the optimum of the standard optimization problem 
$\mathsf{Spectral}(G)$ described in Figure~\ref{fig:spectral}. 
In matrix terminology, the corresponding optimal solution $v_2$ is a 
generalized eigenvector of $L_G$ with respect to $D_G$. 
For our purposes, however, it is best to consider the geometric meaning of 
this optimization formulation. 
To do so, suppose we are operating in a vector space $ \mathbb{R}^V$, where 
the $i$th dimension is stretched by a factor of $d_i,$ so that the natural 
identity operator is $D_G$ and the inner product between two vectors $x$ and 
$y$ is given by $\sum_{i \in V} d_i x_i y_i = x^T D_{G} y$.
In this representation, $\mathsf{Spectral}(G)$ is seeking the vector 
$x \in  \mathbb{R}^V$ that is orthogonal to the all-one vector, lies on the 
unit sphere, and minimizes the Laplacian quadratic form. 
Note that such an optimum $v_2$ may lie anywhere on the unit sphere.  

\begin{figure}
\begin{minipage}{0.5\textwidth}
\begin{alignat*}{4}
  &\text{min} & x^T  L_{G} x \\
                     &\text{s.t.} & x^T D_{G} x = 1  \\
                     &            & (x^T D_{G} 1)^2 = 0  \\
		  &            & x \in \mathbb{R}^V
\end{alignat*}
\end{minipage}
\begin{minipage}{0.5\textwidth}
\begin{alignat*}{4}
 &\text{min} & x^T  L_{G} x               \\
                     &\text{s.t.} & x^T  D_{G} x = 1           \\
		   &            & (x^T D_{G} 1)^2 = 0  \\
                     &            &(x^T D_{G} s) ^2 \geq \kappa   \\
		  &            & x \in \mathbb{R}^V
\end{alignat*}
\end{minipage}
\caption{Global and local spectral optimization programs.
Left: The usual spectral program $\mathsf{Spectral}(G)$. 
Right: Our new locally-biased spectral program 
\textsf{LocalSpectral}$(G,s,\kappa)$.  
In both cases, the optimization variable is the vector $x \in \mathbb{R}^{n}$.
}
\label{fig:spectral}
\end{figure}


Our goal here is to modify $\mathsf{Spectral}(G)$ to incorporate a bias 
towards a target region which we assume is given to us as an input vector $s$.
We will assume (without loss of generality) that $s$ is properly normalized 
and orthogonalized so that $s^T D_{G} s =1$ and $s^T D_{G} 1 =0$. 
While $s$ can be a general unit vector orthogonal to $1$, it may be helpful 
to think of $s$ as the indicator vector of one or more vertices in $V$, 
corresponding to the target region of the graph.
We obtain $\textsf{LocalSpectral}(G,s,\kappa)$ from $\mathsf{Spectral}(G)$ 
by requiring that a feasible solution also have a sufficiently large 
correlation with the vector $s$. 
This is achieved by the addition of the constraint 
$(x^T D_{G} s)^2 \geq \kappa$, which ensures that the projection of $x$ onto 
the direction $s$ is at least $\sqrt{\kappa}$ in absolute value, where the 
parameter $\kappa$ is also an input parameter ranging between $0$ and $1$. 
Thus, we would like the solution to be well-connected with or to lie near 
the \emph{seed} vector $s$. 
In particular, as displayed pictorially in Figure~\ref{fig:sphere}, $x$ must 
lie within the spherical cap centered at $s$ that contains all vectors at an 
angle of at most $\arccos(\sqrt{\kappa})$ from $s$. 
Thus, higher values of $\kappa$ demand a higher correlation with $s$ and, 
hence, a stronger localization. 
Note that in the limit $\kappa = 0$, the spherical cap constituting the 
feasible region of the program is guaranteed to include $v_2$ and 
$\textsf{LocalSpectral}(G,s,\kappa)$ is equivalent to $\mathsf{Spectral}(G)$. 
In the rest of this paper, we refer to $s$ as the \emph{seed vector} and to 
$\kappa$ as the \emph{correlation parameter} for a given 
$\textsf{LocalSpectral}(G,s,\kappa)$ optimization problem.  
Moreover, we denote the objective value of the program
$\textsf{LocalSpectral}(G,s,\kappa)$ by the number $\lambda(G,s,\kappa)$. 

\begin{figure}[h]
   \begin{center}
   \includegraphics[ scale=.17]{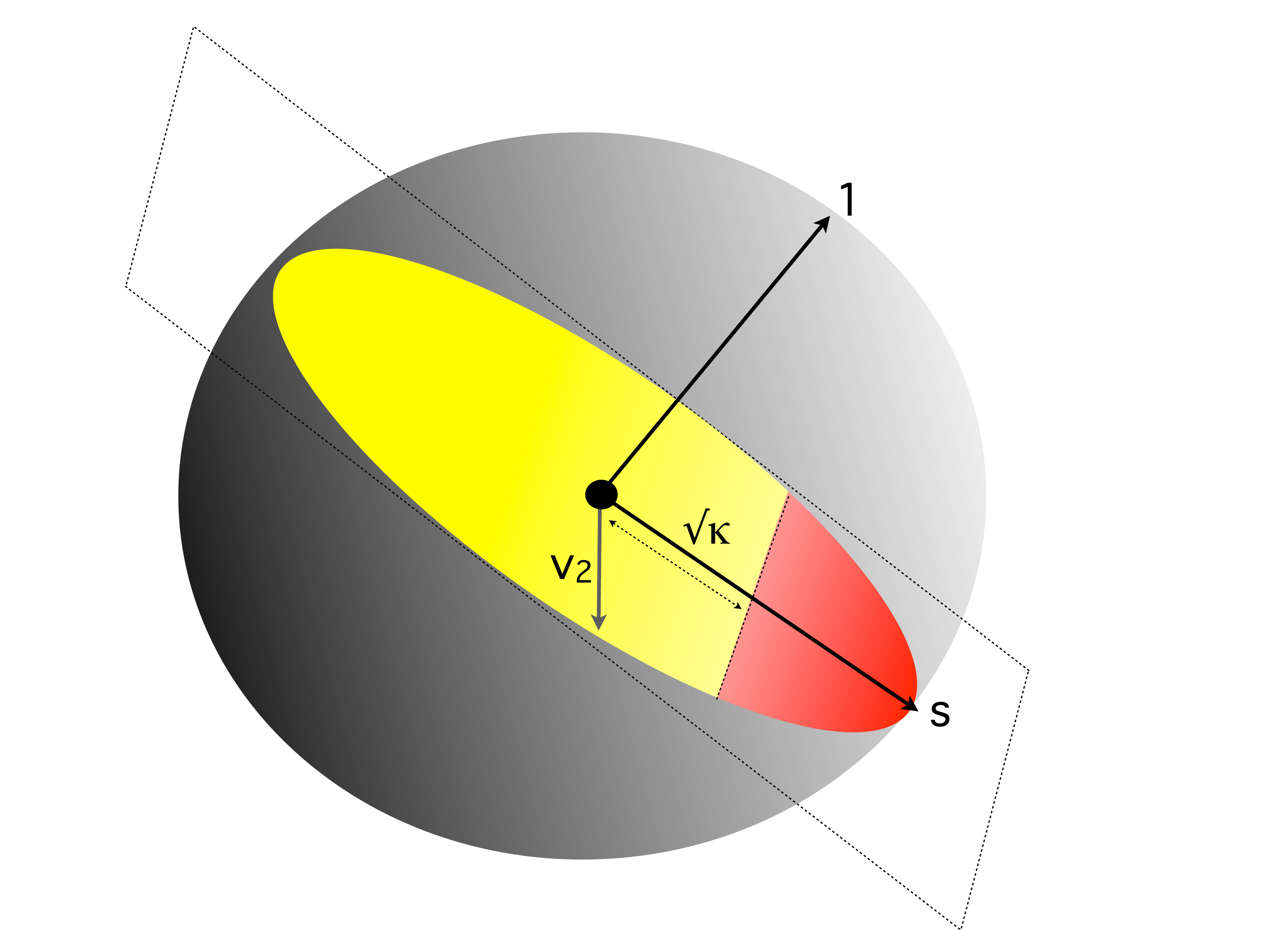}
   \end{center}
\caption{(Best seen in color.) Pictorial representation of the feasible 
regions of the optimization programs $\mathsf{Spectral}(G)$ and 
$\textsf{LocalSpectral}(G,s,\kappa)$ that are defined in 
Figure~\ref{fig:spectral}.  See the text for a discussion.}
\label{fig:sphere}	
\end{figure}

\subsection{Characterization of the Optimal Solutions of \textsf{LocalSpectral}}
\label{sxn:optimize-theory}

Our first theorem is a characterization of the optimal solutions of 
\textsf{LocalSpectral}. 
Although \textsf{LocalSpectral} is a non-convex program (as, of course, is 
\textsf{Spectral}), the following theorem states that solutions to it can be 
expressed as the solution to a system of linear equations which has a 
natural interpretation.
The proof of this theorem (which may be found in 
Section~\ref{sxn:optimize-proofs-pagerank}) will involve a relaxation of the 
non-convex program \textsf{LocalSpectral} to a convex semidefinite program (SDP), 
\emph{i.e.}, the variables in the optimization program will be distributions 
over vectors rather than the vectors themselves.
For the statement of this theorem, recall that $A^{+}$ denotes the (uniquely 
defined) Moore-Penrose pseudoinverse of the matrix $A$.

\begin{theorem}[Solution Characterization]
\label{thm:pagerank}
Let $s \in \mathbb{R}^{V}$ be a seed vector such that $s^T D_G 1 =0$, 
$s^T D_G s = 1$, and $s^T D_G v_2 \neq 0$, where $v_{2}$ is the second 
generalized eigenvector of $L_G$ with respect to $D_G$.
In addition, let $1> \kappa \geq 0$ be a correlation parameter, and let
$x^{\star}$ be an optimal solution to \textsf{LocalSpectral}$(G,s,\kappa)$.
Then, there exists some $\gamma \in (-\infty, \lambda_{2}(G))$ and a 
$c \in [0, \infty]$ such that 
\begin{equation}
\label{eqn:xstar}
 x^{\star} = c(L_{G}-\gamma D_G)^{+} D_G s. 
\end{equation}
\end{theorem}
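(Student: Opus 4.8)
The plan is to certify optimality of the candidate solution by semidefinite duality, following the SDP-relaxation strategy announced in the text. First I would lift the non-convex program to an SDP in the matrix variable $X \succeq 0$ (thinking of $X = xx^T$): the objective $x^TL_Gx$ becomes $L_G \circ X$, the normalization $x^TD_Gx=1$ becomes $D_G \circ X = 1$, the orthogonality $(x^TD_G1)^2 = 0$ becomes $(D_G11^TD_G)\circ X = 0$, and --- crucially --- the reverse-convex constraint $(x^TD_Gs)^2 \geq \kappa$ becomes the \emph{linear} (hence convex) constraint $(D_Gss^TD_G)\circ X \geq \kappa$. Because $\kappa < 1$, one can exhibit a relative-interior feasible point (a small full-rank perturbation of $ss^T$ supported on $\{v : 1^TD_Gv = 0\}$, keeping $D_G\circ X = 1$), so Slater's condition holds and strong duality applies with both optima attained. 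The dual reads $\max\ \alpha + \mu\kappa$ subject to $\mu \geq 0$ and $M \defeq L_G - \alpha D_G - \beta\,D_G11^TD_G - \mu\,D_Gss^TD_G \succeq 0$.

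Next I would read the structure off from complementary slackness. For any primal-optimal $X^\star$ and dual-optimal $(\alpha,\beta,\mu)$, strong duality forces $M \circ X^\star = 0$ and $\mu\big((D_Gss^TD_G)\circ X^\star - \kappa\big) = 0$; the first equation, together with $M,X^\star \succeq 0$, gives $MX^\star = 0$, i.e.\ $\mathrm{range}(X^\star) \subseteq \ker M$. The heart of the matter is to show $X^\star$ can be taken rank one. I would pass to the $D_G$-orthonormal generalized eigenbasis $v_1 = 1/\sqrt{\vol(G)}, v_2,\dots,v_n$ with eigenvalues $0 = \lambda_1 < \lambda_2 \leq \cdots$; there $M$ block-decomposes into a scalar on the $v_1$-direction and, on $\{v : 1^TD_Gv = 0\}$, the matrix $\mathrm{diag}(\lambda_i - \alpha)_{i\geq 2} - \mu\,\bar s\bar s^T$ with $\bar s = (v_i^TD_Gs)_{i\geq 2}$ --- a rank-one perturbation of a diagonal. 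From $v_2^TMv_2 \geq 0$ one gets $\alpha \leq \lambda_2 - \mu\,(s^TD_Gv_2)^2 \leq \lambda_2$; when $\alpha < \lambda_2$ the diagonal block is positive definite, so the rank-one perturbation drops the rank by at most one, making $\dim(\ker M \cap \{v : 1^TD_Gv = 0\}) \leq 1$ and hence $X^\star = x^\star(x^\star)^T$. The boundary case $\alpha = \lambda_2$ forces $\mu = 0$ --- this is exactly where the hypothesis $s^TD_Gv_2 \neq 0$ enters --- and corresponds to the degenerate regime $\kappa \leq (s^TD_Gv_2)^2$ in which $x^\star = v_2$; it is captured by the limiting values $\gamma \uparrow \lambda_2$, $c = \infty$.

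Finally, in the generic case $\alpha < \lambda_2$, I would invert. Since $x^\star(x^\star)^T$ satisfies the orthogonality constraint, $1^TD_Gx^\star = 0$, so the $\beta$-term in $Mx^\star = 0$ vanishes, leaving $(L_G - \alpha D_G)\,x^\star = \mu\,(s^TD_Gx^\star)\,D_Gs$. Set $\gamma = \alpha$ and $c = \mu\,(s^TD_Gx^\star)$. Because $s^TD_G1 = 0$ we have $D_Gs = \sum_{i\geq 2}(v_i^TD_Gs)\,D_Gv_i$, and for $\gamma < \lambda_2$ each $D_Gv_i$ ($i\geq 2$) lies in the range of $L_G - \gamma D_G$; hence applying $(L_G - \gamma D_G)^+$ to both sides yields $x^\star = c\,(L_G - \gamma D_G)^+ D_Gs$. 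It remains to check the ranges: $\gamma = \alpha < \lambda_2$ has just been shown, and choosing the sign of $x^\star$ so that the tight constraint reads $s^TD_Gx^\star = +\sqrt{\kappa} \geq 0$, together with $\mu \geq 0$, gives $c \geq 0$ (with $c = \infty$ reserved for the degenerate regime above).

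The step I expect to be the main obstacle is the rank reduction --- proving the SDP relaxation is tight, i.e.\ that an optimal $X^\star$ of rank one exists --- and, hand in hand with it, the careful handling of the degenerate regime where the locality constraint is slack. Constructing the relaxation, verifying Slater, and invoking strong duality are routine, and the final inversion to the Personalized-PageRank-like formula is algebra; but isolating the one-dimensional kernel of $M$ and correctly attributing the roles of the normalization $s^TD_Gs = 1$, the orthogonality $s^TD_G1 = 0$, and the non-degeneracy $s^TD_Gv_2 \neq 0$ is where the actual content lies.
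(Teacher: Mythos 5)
Your proposal is correct and follows essentially the same route as the paper: lift to an SDP, verify Slater's condition and invoke strong duality, use complementary slackness to place the range of $X^{\star}$ in the kernel of the dual slack matrix, show that matrix is a rank-one perturbation of a positive definite operator on the complement of $1$ (so $X^{\star}$ has rank one), handle the boundary case $\alpha=\lambda_2$ via the hypothesis $s^TD_Gv_2\neq 0$, and invert to obtain the formula. The only cosmetic difference is that you keep the orthogonality constraint explicit with its own multiplier, whereas the paper folds it into the single constraint $L_{K_n}\circ X=1$; the substance is identical.
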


\noindent
There are several parameters (such as $s$, $\kappa$, $\gamma$, and $c$) in 
the statement of Theorem~\ref{thm:pagerank}, and understanding their 
relationship is important:
$s$ and $\kappa$ are the parameters of the program; $c$ is a normalization 
factor that rescales the norm of the solution vector to be $1$ (and that can 
be computed in linear time, given the solution vector); and $\gamma$ is 
implicitly defined by $\kappa$, $G$, and~$s$. 
The correct setting of $\gamma$ ensures that 
$(s^T D_{G} x^\star)^2 = \kappa,$ \emph{i.e.}, that $x^\star$ is found 
exactly on the boundary of the feasible region.
At this point, it is important to notice the behavior of $x^\star$ and 
$\gamma$ as $\kappa$ changes. 
As $\kappa$ goes to $1$, $\gamma$ tends to $-\infty$ and $x^\star$ 
approaches $s$;
conversely, as $\kappa$ goes to $0$,  $\gamma$ goes to $\lambda_2(G)$ and 
$x^\star$ tends towards $v_2$, the global eigenvector.
We will discuss how to compute $\gamma$ and $x^\star$, given a specific 
$\kappa$, in Section~\ref{sxn:optimize-comp}.

Finally, we should note that there is a close connection between the 
solution vector $x^\star$ and the popular PageRank procedure. 
Recall that PageRank refers to a method to determine a global rank or global 
notion of importance for a node in a graph such as the web that is based on 
the link structure of the graph~\cite{BP98,LM04,berkhin05_pagerank}. 
There have been several extensions to the basic PageRank concept, including 
Topic-Sensitive PageRank~\cite{haveliwala03_topicpr}
and Personalized PageRank~\cite{JW03}. 
In the same way that PageRank can be viewed as a way to express the quality 
of a web page over the entire web, Personalized PageRank expresses a 
link-based measure of page quality around user-selected pages.
In particular, given a vector $s \in \mathbb{R}^{V}$ and a 
\emph{teleportation} constant $\alpha> 0$, the Personalized PageRank vector 
can be written as
$\mbox{pr}_{\alpha,s}=\left(L_{G}+\frac{1-\alpha}{\alpha}D_{G}\right)^{-1}D_{G}s
$~\cite{andersen06local}.
By setting $\gamma = -\frac{1-\alpha}{\alpha},$  the optimal solution to \textsf{LocalSpectral} is proved to 
be a generalization of Personalized PageRank. 
In particular, this means that for high values of the correlation parameter 
$\kappa$, for which the corresponding $\gamma$ in Theorem~\ref{thm:pagerank}
is negative, the optimal solution to \textsf{LocalSpectral} takes the form 
of a Personalized PageRank vector. 
On the other hand, when $\gamma \geq 0,$ the optimal solution to 
\textsf{LocalSpectral} provides a smooth way of transitioning from the 
Personalized PageRank vector to the global second eigenvector~$v_2$.

\subsection{Computation of the Optimal Solutions of \textsf{LocalSpectral}}
\label{sxn:optimize-comp}

In this section, we discuss how to compute efficiently an optimal solution
for $\textsf{LocalSpectral}(G,s, \kappa)$, for a fixed choice of the 
parameters $G$, $s$, and $\kappa$. 
The following theorem is our main result.

\begin{theorem}[Solution Computation]
\label{thm:comp}
For any $\varepsilon >0$, a solution to \textsf{LocalSpectral}$(G,s,\kappa)$ 
of value at most $(1+\varepsilon )\cdot \lambda (G,s,\kappa)$ can be 
computed in time 
$\tilde{O}(\nfrac{m}{\sqrt{\lambda_{2}(G)}} \cdot \log(\nfrac{1}{\varepsilon }))$ 
using the Conjugate Gradient Method~\cite{GVL96}. 
Alternatively, such a solution can be computed in time 
$\tilde{O}(m \log(\nfrac{1}{\varepsilon }))$ using the Spielman-Teng 
linear-equation solver~\cite{Spielman:2004}. 
\end{theorem}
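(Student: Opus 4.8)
The plan is to reduce the computation of an optimal solution of $\textsf{LocalSpectral}(G,s,\kappa)$ to solving a single linear system of the form $(L_G - \gamma D_G)x = D_G s$, using the characterization in Theorem~\ref{thm:pagerank}, and then to show that (a) the correct value of $\gamma$ can be found by a one-dimensional search, and (b) each linear solve can be carried out in the stated time. First I would observe that by Theorem~\ref{thm:pagerank} the optimal $x^\star$ equals $c(L_G - \gamma D_G)^{+}D_G s$ for some $\gamma \in (-\infty,\lambda_2(G))$, so that if we knew $\gamma$ we could recover $x^\star$ (up to the trivial normalization constant $c$, computable in linear time from $x^TD_Gx=1$) by solving one linear system in the symmetric, positive-semidefinite-on-the-relevant-subspace matrix $L_G - \gamma D_G$. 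Note that for $\gamma < \lambda_2(G)$ this matrix is positive definite on the subspace orthogonal (in the $D_G$ inner product) to $1$, which is exactly where $D_G s$ lives since $s^TD_G1 = 0$; its smallest nonzero generalized eigenvalue is $\lambda_2(G)-\gamma \ge \lambda_2(G) - \gamma$, and in the worst relevant regime $\gamma$ can be as large as just below $\lambda_2(G)$, giving a condition number controlled by $\lambda_2(G)$ — this is what yields the $1/\sqrt{\lambda_2(G)}$ factor for Conjugate Gradient, whose iteration count scales with the square root of the condition number and whose per-iteration cost is $O(m)$ (a sparse matrix-vector product). For the Spielman--Teng alternative, the key point is that $L_G - \gamma D_G = L_G + \frac{1-\alpha}{\alpha}D_G$ with $\alpha = 1/(1-\gamma)$ when $\gamma < 0$ is a symmetric diagonally dominant (SDD) matrix, and more generally one reduces the solve to an SDD solve, so the nearly-linear-time SDD solver applies and gives $\tilde O(m\log(1/\varepsilon))$.

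The second ingredient is determining $\gamma$. Here I would use the monotonicity noted after Theorem~\ref{thm:pagerank}: define $q(\gamma) \defeq (s^T D_G x(\gamma))^2$ where $x(\gamma) = (L_G - \gamma D_G)^{+}D_G s$ normalized so that $x(\gamma)^TD_Gx(\gamma)=1$; writing $D_Gs$ in the generalized eigenbasis $\{v_i\}$ of $(L_G,D_G)$ as $D_Gs = \sum_i (s^TD_Gv_i) D_G v_i$, one gets $x(\gamma) \propto \sum_i \frac{s^TD_Gv_i}{\lambda_i - \gamma} v_i$, and $q(\gamma)$ becomes an explicit rational function of $\gamma$ that is continuous and strictly monotone on $(-\infty,\lambda_2(G))$, ranging from $q \to 1$ as $\gamma\to-\infty$ to $q\to (s^TD_Gv_2)^2/\|{\cdot}\|$-type behavior, and in particular covering all of $(0,1)$ under the hypothesis $s^TD_Gv_2\ne 0$. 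Hence the equation $q(\gamma)=\kappa$ has a unique solution, and it can be located by binary search on $\gamma$: each evaluation of $q(\gamma)$ requires one linear solve, and $O(\log(1/\varepsilon))$ (plus $\mathrm{poly}\log$ dependence on the bit complexity, absorbed into $\tilde O$) steps of bisection suffice to pin down $\gamma$, and hence $x^\star$, to within the tolerance needed so that the objective value $x^TL_Gx$ is within a $(1+\varepsilon)$ factor of $\lambda(G,s,\kappa)$.

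The approach then is: (i) set up the bisection on $\gamma\in(-\infty,\lambda_2(G))$ — or more conveniently on $\alpha$, or on a bounded reparametrization — for the monotone scalar equation $q(\gamma)=\kappa$; (ii) at each step solve $(L_G-\gamma D_G)x = D_Gs$ either with preconditioned/plain Conjugate Gradient (cost $\tilde O(m/\sqrt{\lambda_2(G)}\cdot\log(1/\varepsilon))$ per solve, from the condition-number bound) or with the Spielman--Teng SDD solver (cost $\tilde O(m\log(1/\varepsilon))$); (iii) renormalize and verify the correlation constraint; (iv) do a perturbation/continuity analysis showing that an $O(\varepsilon)$-accurate $\gamma$ and an $O(\varepsilon)$-accurate linear solve together give an objective value at most $(1+\varepsilon)\lambda(G,s,\kappa)$. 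The main obstacle is step (iv) together with controlling the conditioning as $\gamma$ approaches $\lambda_2(G)$ (equivalently, as $\kappa\to 0$): one must argue that the relevant component of $D_Gs$ along $v_2$ does not vanish (guaranteed by the hypothesis $s^TD_Gv_2\ne0$), quantify how $q(\gamma)$ behaves near $\lambda_2(G)$ so that the bisection does not need to push $\gamma$ pathologically close to $\lambda_2(G)$ for a fixed target $\kappa$, and translate linear-solver residual error into error in the value of the homogeneous quadratic form $x^TL_Gx/ (x^TD_Gx)$ — this is where a careful but routine stability estimate is needed, and it is what makes the $\log(1/\varepsilon)$ (rather than $\mathrm{poly}(1/\varepsilon)$) dependence non-obvious. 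Everything else — the sparse matrix-vector products, the SDD reduction, the normalization constant $c$ — is standard and absorbed into the $\tilde O(\cdot)$ notation.
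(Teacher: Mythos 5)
Your proposal follows essentially the same route as the paper's proof: invoke Theorem~\ref{thm:pagerank} to reduce the problem to solving $(L_G-\gamma D_G)y = D_G s$ for a fixed $\gamma$, and locate the correct $\gamma$ by binary search until $(s^TD_Gx)^2$ is sufficiently close to $\kappa$. Your write-up is in fact more careful than the paper's two-line argument, which asserts the linear-solve costs and the binary search without addressing the monotonicity of the correlation as a function of $\gamma$, the conditioning as $\gamma\to\lambda_2(G)$, or the error-propagation issues you correctly flag as the delicate points.
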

\begin{proof}
By Theorem~\ref{thm:pagerank}, we know that the optimal solution $x^\star$ must be a unit-scaled version of 
$y(\gamma) =(L_{G}-\gamma D_G)^{+} D_G s,$ for an appropriate choice of $\gamma \in (-\infty, \lambda_{2}(G)).$
Notice that, given a fixed $\gamma,$ the task of computing $y(\gamma)$ is equivalent to solving the system of linear equations
$(L_{G}-\gamma D_G) y = D_{G}s$ for the unknown $y.$ This operation can be performed, up to accuracy $\varepsilon ,$ in time $\tilde{O}(\nfrac{m}{\sqrt{\lambda_{2}(G)}} \cdot \log(\nfrac{1}{\varepsilon }))$ 
using the Conjugate Gradient Method, or  in time 
$\tilde{O}(m \log(\nfrac{1}{\varepsilon }))$ using the Spielman-Teng linear-equation 
solver.
To find the correct setting of $\gamma,$ it suffices to perform a binary search over the possible values of $\gamma$ in the interval $(-\vol(G), \lambda_2(G)),$ until $(s^T D_{G} x)^2$ is sufficiently close to $\kappa.$ 
\end{proof}

\noindent
We should note that, depending on the application, other methods of 
computing a solution to $\textsf{LocalSpectral}(G,s,\kappa)$ might be more 
appropriate.
In particular, if an eigenvector decomposition of $L_G$ has been 
pre-computed, as is the case in certain machine learning and data analysis 
applications, then this computation can be modified as follows.
Given an eigenvector decomposition of $L_G$ as 
$ L_G = \sum_{i=2}^n \lambda_i D_G^{1/2} u_i u_i^T D_G^{1/2} $, then
$y(\gamma)$ must take the form
$$
y(\gamma) = (L_{G}-\gamma D_G)^{+} D_G s= \sum_{i=2}^n \frac{1}{\lambda_i - \gamma} (s^T D_G^{1/2} u)^2  ,
$$
for the same choice of $c$ and $\gamma$, as in Theorem~\ref{thm:pagerank}. 
Hence, given the eigenvector decomposition, each guess $y(\gamma)$ of the 
binary search can be computed by expanding the above series, which requires 
a linear number of inner product computations.
While this may yield a worse running time than Theorem~\ref{thm:comp} in 
the worst case, in the case that the graph is well-approximated by a small 
number $k$ of dominant eigenvectors, then the computation is reduced to 
only $k$ straightforward inner product computations.

\subsection{Proof of Theorem~\ref{thm:pagerank}}
\label{sxn:optimize-proofs-pagerank}

We start with an outline of the proof.
Although the program \textsf{LocalSpectral}$(G,s,\kappa)$  is \emph{not} 
convex, it can be relaxed to the convex semidefinite program 
$\sdp_p(G,s, \kappa)$ of Figure~\ref{fig:sdp}. 
Then, one can observe that strong duality holds for this SDP relaxation. 
Using strong duality and the related complementary slackness conditions, one 
can argue that the primal $\sdp_p(G,s, \kappa)$ has a rank one unique optimal 
solution under the conditions of the theorem.
This implies that the optimal solution of $\sdp_p(G,s, \kappa)$ is the same 
as the optimal solution of \textsf{LocalSpectral}$(G,s, \kappa)$.
Moreover, combining this fact with the complementary slackness condition 
obtained from the dual $\sdp_d(G,s, \kappa)$ of Figure~\ref{fig:sdp}, one 
can derive that the optimal rank one solution is of the form promised by 
Theorem~\ref{thm:pagerank}.

\begin{figure}
\begin{minipage}{0.5\textwidth}
\begin{alignat*}{4}
\quad&  &\text{minimize} \quad &&  L_{G} \circ  X\\
  &  &\text{s.t.} \quad &&  L_{K_{n}} \circ  X = 1\ \\
  &  & &&  ( D_{G} s)( D_{G} s)^T \circ  X \geq \kappa \\
  & & & & X \succeq 0
\end{alignat*} 
\end{minipage}
\begin{minipage}{.25\textwidth}
\begin{alignat*}{4}
\quad&  &\text{maximize} \quad && \alpha + \kappa \beta\\
  &  &\text{s.t.} \quad &&  L_{G}  \succeq  \alpha  L_{K_{n}} + \beta ( D_{G} {s})( D_{G} {s})^T \ \\
  &  & &&  \beta \geq 0 \\
  & & & & \alpha \in \mathbb{R}
\end{alignat*}
\end{minipage}
\caption{
Left:  Primal SDP relaxation of \textsf{LocalSpectral}$(G,s, \kappa)$:  
$\sdp_{p}(G,s,\kappa)$;  
for this primal, the optimization variable is $X \in \mathbb{R}^{V \times V}$ 
such that $X$ is symmetric and positive semidefinite.
Right: Dual SDP relaxation of \textsf{LocalSpectral}$(G,s, \kappa)$: 
$\sdp_{d}(G,s,\kappa)$;
for this dual, the optimization variables are $\alpha,\beta\in\mathbb{R}$. 
Recall that $L_{K_{n}} \defeq D_{G}-\frac{1}{\vol(G)}D_{G}11^{T}D_{G}.$
}
\label{fig:sdp}
\end{figure}

Before proceeding with the details of the proof, we pause to make several 
points that should help to clarify our approach.
\begin{itemize}
\item
First, since it may seem to some readers to be unnecessarily complex to relax 
\textsf{LocalSpectral} as an SDP, we emphasize that the motivation for 
relaxing it in this way is that we would like to prove 
Theorem~\ref{thm:pagerank}.
To prove this theorem, we must understand the form of the optimal solutions 
to the non-convex program \textsf{LocalSpectral}. 
Thus, in order to overcome the non-convexity, we relax \textsf{LocalSpectral} 
to $\sdp_p(G,s, \kappa)$ (of Figure~\ref{fig:sdp}) by ``lifting'' the 
rank-$1$ condition implicit in \textsf{LocalSpectral}. 
Then, strong duality applies; and it implies a set of sufficient optimality 
conditions. 
By combining these conditions, we will be able to establish that an optimal 
solution $X^{\star}$ to $\sdp_p(G,s, \kappa)$ has rank $1$, \emph{i.e.}, it has the 
form $X^{\star}=x^{\star}x^{\star T}$ for some vector $x^{\star}$; and thus it yields an optimal solution 
to \textsf{LocalSpectral}, \emph{i.e.}, the vector~$x^{\star}$. 
\item
Second, in general, the value of a relaxation like $\sdp_p(G,s, \kappa)$ 
may be strictly less than that of the original program 
(\textsf{LocalSpectral}, in this case). 
Our characterization and proof will imply that the relaxation is tight, 
\emph{i.e.}, that the optimum of $\sdp_p(G,s, \kappa)$ equals that of 
\textsf{LocalSpectral}.
The reason is that one can find a rank-$1$ optimal solution to 
$\sdp_p(G,s, \kappa)$, which then yields an optimal solution of the same 
value for \textsf{LocalSpectral}. 
Note that this also implies that strong duality holds for the non-convex 
\textsf{LocalSpectral}, although this observation is not needed for our 
proof.
\end{itemize}
\noindent
That is, although it may be possible to prove Theorem~\ref{thm:pagerank} in 
some other way that does not involve SDPs, we 
chose this proof since it is simple and intuitive and correct; and we 
note that Appendix B in the textbook of Boyd and Vandenberghe~\cite{Boyd04} 
proves a similar statement by the same SDP-based approach.


Returning to the details of the proof, we will 
proceed to prove the theorem by establishing a sequence of claims.
First, consider $\sdp_p(G,s, \kappa)$ and its dual $\sdp_d(G,s, \kappa)$ (as 
shown in Figure~\ref{fig:sdp}). 
The following claim uses the fact that, given $X=xx^T$ for 
$x \in \mathbb{R}^V$, and for any matrix $A \in \mathbb{R}^{V \times V}$, we 
have that $A \circ X = x^T A x$. 
In particular, $L_G \circ X = x^T L_G x$, for any graph $G$, and 
$(x^{T} D_{G} s)^2 = x^T D_{G} s s^T D_{G} x = D_{G}ss^TD_{G} \circ X$.

\begin{claim}
The primal $\sdp_p(G,s, \kappa)$ is a relaxation of the vector program 
\textsf{LocalSpectral}$(G,s, \kappa)$.
\end{claim}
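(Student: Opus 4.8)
The plan is to prove this by the standard lifting argument: show that every vector $x$ that is feasible for \textsf{LocalSpectral}$(G,s,\kappa)$ gives rise, via $X = xx^T$, to a matrix that is feasible for $\sdp_p(G,s,\kappa)$ and attains the same objective value. Since $\sdp_p$ is a minimization problem and its feasible region contains (the images under $x \mapsto xx^T$ of) all feasible points of \textsf{LocalSpectral}, this immediately yields that the optimum of $\sdp_p(G,s,\kappa)$ is at most $\lambda(G,s,\kappa)$, which is exactly the assertion that it is a relaxation.

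Concretely, I would first record the elementary identity noted just before the claim: for any $A \in \mathbb{R}^{V \times V}$ and $X = xx^T$ we have $A \circ X = \Tr(A^T x x^T) = x^T A x$. Applying this with $A = L_G$ shows that the objective of $\sdp_p$ evaluated at $X = xx^T$ equals $x^T L_G x$, the objective of \textsf{LocalSpectral} at $x$. Then I would check the three constraints of $\sdp_p$ in turn. Positive semidefiniteness $X \succeq 0$ is automatic for $X = xx^T$. For the normalization $L_{K_n} \circ X = 1$, expand $L_{K_n} = D_G - \frac{1}{\vol(G)} D_G 11^T D_G$ and use the identity to get $L_{K_n} \circ X = x^T D_G x - \frac{1}{\vol(G)} (x^T D_G 1)^2$; the first term equals $1$ and the second equals $0$ by the two equality constraints $x^T D_G x = 1$ and $(x^T D_G 1)^2 = 0$ of \textsf{LocalSpectral}, so the value is $1$. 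For the correlation constraint, the identity gives $(D_G s)(D_G s)^T \circ X = x^T D_G s s^T D_G x = (x^T D_G s)^2 \geq \kappa$, which is precisely the locality constraint of \textsf{LocalSpectral}. Hence $X = xx^T$ is feasible for $\sdp_p(G,s,\kappa)$ with the same objective value, completing the argument.

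There is no genuine obstacle here; the proof is a routine verification. The only point worth a moment's care is that the single SDP constraint $L_{K_n} \circ X = 1$ simultaneously encodes both equality constraints of the vector program — unit degree-weighted norm and orthogonality to $1$ — which is exactly why the normalization in $\sdp_p$ is written in terms of $L_{K_n}$ rather than $D_G$. One should also remember that the SDP drops the implicit rank-one restriction present in \textsf{LocalSpectral}, so it is genuinely a relaxation and its value could a priori be strictly smaller; establishing tightness is deferred to the later claims in the proof of Theorem~\ref{thm:pagerank}.
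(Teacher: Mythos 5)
Your proof is correct and follows exactly the paper's argument: lift a feasible $x$ to $X = xx^T$, use the identity $A \circ X = x^T A x$ to match the objective and each constraint, and conclude that the SDP's optimum is at most $\lambda(G,s,\kappa)$. The paper's own proof is a one-line version of this same verification, so there is nothing further to add.
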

\begin{proof} 
Consider a vector $x$ that is a feasible solution to 
\textsf{LocalSpectral}$(G,s, \kappa)$, and note that $X=xx^T$ is a feasible 
solution to $\sdp_p(G,s, \kappa)$.
\end{proof}

\noindent
Next, we establish the strong duality of $\sdp_p(G,s, \kappa)$.
(Note that the feasibility conditions and
complementary slackness conditions
stated below
may not suffice to 
establish the optimality, in the absence of this claim; 
hence, without this claim, we could not prove the subsequent claims, which 
are needed to prove the theorem.)

\begin{claim}
Strong duality holds between $\sdp_p(G,s, \kappa)$ and $\sdp_d(G,s, \kappa)$.
\end{claim}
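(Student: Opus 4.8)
The plan is to establish strong duality through Slater's condition. Weak duality, $p^\star \ge d^\star$ (where $p^\star$ and $d^\star$ denote the optimal values of $\sdp_p(G,s,\kappa)$ and $\sdp_d(G,s,\kappa)$), is automatic, so it suffices to exhibit a point that is \emph{strictly} feasible for the primal and to note that $p^\star$ is finite; the conic strong-duality theorem for semidefinite programs then gives $p^\star = d^\star$ with the dual optimum attained. Finiteness is immediate: the primal is feasible (for instance $X = ss^T$ satisfies all constraints, using $s^T D_G s = 1$ and $s^T D_G 1 = 0$), and $L_G \circ X \ge 0$ for every feasible $X$ since $L_G \succeq 0$ and $X \succeq 0$, so $p^\star \ge 0$.

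To build a Slater point I would interpolate between two matrices. The first, $X = ss^T$, already meets the equality constraint and the correlation inequality \emph{strictly}: using the identity $A \circ (ss^T) = s^T A s$ noted before the claim, $L_{K_n} \circ (ss^T) = s^T D_G s - \frac{1}{\vol(G)}(s^T D_G 1)^2 = 1$ and $(D_G s)(D_G s)^T \circ (ss^T) = (s^T D_G s)^2 = 1 > \kappa$ (here $\kappa < 1$ is used); its only defect is that it is rank one, not positive definite. The second is $D_G^{-1}$, which is positive definite because $G$ is connected and hence all weighted degrees are positive; a one-line trace computation gives $L_{K_n} \circ D_G^{-1} = \Tr(I - \frac{1}{\vol(G)} D_G 1 1^T) = n-1$ and $(D_G s)(D_G s)^T \circ D_G^{-1} = s^T D_G s = 1$. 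Taking the convex combination $X_\epsilon = (1-\epsilon)\, s s^T + \frac{\epsilon}{n-1}\, D_G^{-1}$ for small $\epsilon \in (0,1)$, one gets $X_\epsilon \succ 0$; the weights are arranged so that $L_{K_n} \circ X_\epsilon = (1-\epsilon) + \frac{\epsilon}{n-1}(n-1) = 1$ exactly; and $(D_G s)(D_G s)^T \circ X_\epsilon = 1 - \epsilon\,\frac{n-2}{n-1}$, which is still $> \kappa$ once $\epsilon$ is taken small enough (possible since $\kappa < 1$). Thus $X_\epsilon$ is strictly feasible.

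I would then finish by invoking the standard SDP strong-duality theorem: rewriting the constraint $(D_G s)(D_G s)^T \circ X \ge \kappa$ with a nonnegative slack variable puts the program over the cone $\mathbb{S}^n_+ \times \mathbb{R}_+$, whose interior is exactly ``$X \succ 0$ together with the correlation constraint holding with strict inequality'', so $X_\epsilon$ is a genuine Slater point; since $p^\star$ is also finite, this forces $p^\star = d^\star$ and attainment of the dual. I expect the main obstacle to be purely bookkeeping: arranging a single perturbation of $ss^T$ that simultaneously stays positive definite, preserves the \emph{equality} constraint $L_{K_n} \circ X = 1$ exactly, and keeps the correlation inequality strict — which is precisely why the two auxiliary identities $L_{K_n} \circ (ss^T) = 1$ and $L_{K_n} \circ D_G^{-1} = n-1$ are worth isolating. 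Note that one cannot route the argument through Slater on the dual instead: every matrix of the form $L_G - \alpha L_{K_n} - \beta (D_G s)(D_G s)^T$ annihilates the all-ones vector $1$ and is therefore never strictly positive definite, so the proof must go through strict feasibility of the primal.
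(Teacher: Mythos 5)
Your proof is correct and follows the same basic route as the paper---verify Slater's condition for the convex primal and invoke strong duality---but your execution is actually more careful than the paper's own. The paper's proof simply exhibits $X=ss^{T}$ and checks that $(D_{G}s)(D_{G}s)^{T}\circ ss^{T}=1>\kappa$; it never addresses the fact that $ss^{T}$ is rank one and therefore does not lie in the interior of the positive semidefinite cone, which is what the conic form of Slater's condition requires (affine scalar constraints may be handled by the refined condition, but the point must still be positive definite). Your interpolation $X_{\epsilon}=(1-\epsilon)ss^{T}+\frac{\epsilon}{n-1}D_{G}^{-1}$ repairs exactly this: the identities $L_{K_{n}}\circ(ss^{T})=1$, $L_{K_{n}}\circ D_{G}^{-1}=n-1$, and $(D_{G}s)(D_{G}s)^{T}\circ D_{G}^{-1}=s^{T}D_{G}s=1$ all check out, so $X_{\epsilon}\succ 0$ satisfies the trace-equality constraint exactly and keeps the correlation constraint strict for small $\epsilon$, giving a genuine Slater point. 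Your closing observation---that dual Slater is unavailable because every matrix $L_{G}-\alpha L_{K_{n}}-\beta(D_{G}s)(D_{G}s)^{T}$ annihilates the all-ones vector and so is never positive definite---is also correct and explains why the argument must go through the primal. In short: same strategy as the paper, but your version closes a small rigor gap in the published proof; the paper's version buys brevity at the cost of implicitly assuming that strict feasibility of the single non-trivial inequality suffices.
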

\begin{proof} 
Since $\sdp_p(G,s, \kappa)$ is convex, it suffices to verify that Slater's 
constraint qualification condition~\cite{Boyd04} is true for this primal SDP.
Consider $ X= {s}{s}^T$.  
Then, $(D_{G}{s})(D_{G}{s})^T \circ {s}{s}^T = ({s}^T D_{G}{s})^2 = 1>\kappa$.
\end{proof}

\noindent
Next, we use this result to establish the following two claims.
In particular, strong duality allows us to prove the following claim showing 
the KKT-conditions, \emph{i.e.}, the feasibility conditions and 
complementary slackness conditions stated below, suffice to establish
 optimality.

\begin{claim}   
The following feasibility and complementary slackness conditions are 
sufficient for a primal-dual pair 
$ X^\star,\alpha^\star,\beta^\star$ to be an optimal solution.
The feasibility conditions are:
\begin{eqnarray}
  L_{K_{n}} \circ  X^\star &=& 1  \label{F1} \\
 ( D_{G} {s})( D_{G} {s})^T \circ  X^\star &\geq& \kappa   \label{F2}  \\
  L_{G}- \alpha^\star  L_{K_{n}}  - \beta^\star ( D_{G} {s})( D_{G} {s})^T &\succeq& 0  \label{F3} \\
 \beta^\star &\geq& 0  \label{F4}  ,
\end{eqnarray}
and the complementary slackness conditions are:
\begin{eqnarray}
 \alpha^\star(  L_{K_{n}}  \circ  X^\star - 1) &=& 0 \label{C1} \\
 \beta^\star ( ( D_{G} {s})( D_{G} {s})^T \circ  X^\star - \kappa) &=& 0 \label{C2} \\
 X^\star \circ  ( L_{G}- \alpha^\star  L_{K_{n}}  - \beta^\star ( D_{G} {s})( D_{G} {s})^T ) &=& 0 \label{C3}  .
\end{eqnarray}
\end{claim}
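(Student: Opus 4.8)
The plan is to reduce this to the textbook fact that, for a convex program, primal--dual feasibility together with complementary slackness certifies optimality; concretely, I would derive weak duality for the pair $\sdp_p(G,s,\kappa)$--$\sdp_d(G,s,\kappa)$ and then use (C1)--(C3) to show the duality gap vanishes at $(X^\star,\alpha^\star,\beta^\star)$. The one linear-algebra tool needed is that $\iprod{P,Q}=\Tr(PQ)\geq 0$ whenever $P\succeq 0$ and $Q\succeq 0$ (write $\Tr(PQ)=\Tr(P^{1/2}QP^{1/2})$). Applying this with $P=X^\star$ (recall that $X^\star\succeq 0$ is part of being a primal candidate) and, by (F3), $Q=L_G-\alpha^\star L_{K_n}-\beta^\star(D_G s)(D_G s)^T\succeq 0$, and more generally with $P$ any primal-feasible $X$, gives
\[
 L_G\circ X \;\geq\; \alpha^\star\,(L_{K_n}\circ X)\;+\;\beta^\star\,\big((D_G s)(D_G s)^T\circ X\big)\;\geq\;\alpha^\star+\kappa\beta^\star,
\]
where the second inequality uses primal feasibility ($L_{K_n}\circ X=1$ and $(D_G s)(D_G s)^T\circ X\geq\kappa$) together with $\beta^\star\geq 0$ from (F4). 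Thus every primal-feasible $X$ has objective at least $\alpha^\star+\kappa\beta^\star$, the dual objective of the pair $(\alpha^\star,\beta^\star)$, which (F3)--(F4) certify is dual-feasible.

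Next I would check that both inequalities above are in fact equalities at $X=X^\star$. Condition (C3) is precisely $\iprod{X^\star,\,L_G-\alpha^\star L_{K_n}-\beta^\star(D_G s)(D_G s)^T}=0$, collapsing the first inequality; and (C1) (or directly (F1)) gives $\alpha^\star(L_{K_n}\circ X^\star)=\alpha^\star$ while (C2) gives $\beta^\star((D_G s)(D_G s)^T\circ X^\star)=\kappa\beta^\star$, collapsing the second. Hence $L_G\circ X^\star=\alpha^\star+\kappa\beta^\star$. Comparing with the weak-duality bound, $L_G\circ X^\star\leq L_G\circ X$ for every primal-feasible $X$, so $X^\star$ is optimal for $\sdp_p(G,s,\kappa)$; by the symmetric comparison, $(\alpha^\star,\beta^\star)$ is optimal for $\sdp_d(G,s,\kappa)$.

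I do not expect a genuine obstacle: this is the standard sufficiency direction of the KKT conditions for this SDP, and the only care required is (i) treating $X^\star\succeq 0$ as an unstated part of primal feasibility, (ii) the nonnegativity of $\Tr(PQ)$ for PSD $P,Q$, and (iii) matching each of (C1), (C2), (C3) to the correct link in the inequality chain. I would also remark that strong duality (the previous claim) is not logically needed for this direction---weak duality alone suffices to certify optimality---but invoking it is harmless and matches the surrounding exposition; strong duality is, however, what guarantees that such a triple $(X^\star,\alpha^\star,\beta^\star)$ actually exists, which is what the subsequent claims exploit.
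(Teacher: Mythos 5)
Your proposal is correct and is essentially the argument the paper relies on: the paper's proof simply cites convexity and the standard KKT sufficiency result from Boyd--Vandenberghe, and you have written out that textbook argument in full (weak duality via $\Tr(PQ)\geq 0$ for $P,Q\succeq 0$, then using \eqref{C1}--\eqref{C3} to collapse the inequality chain at $X^\star$). Your side remark that weak duality alone certifies this sufficiency direction, with Slater's condition only needed to guarantee that such a KKT triple exists, is accurate and slightly sharper than the paper's own framing.
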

\begin{proof}
This follows from the convexity of $\sdp_p(G,s, \kappa)$ and Slater's 
condition~\cite{Boyd04}.
\end{proof}

\begin{claim}
\label{claim:rankone}
These feasibility and complementary slackness conditions, coupled with the 
assumptions of the theorem, imply that  $X^{\star}$ must be 
rank $1$ and $\beta^{\star}>0.$
\end{claim}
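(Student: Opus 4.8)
The plan is to extract the form of an optimal $X^{\star}$ from the complementary-slackness condition~\eqref{C3} together with the semidefiniteness conditions, by analyzing the kernel of the dual slack matrix $M^{\star}\defeq L_{G}-\alpha^{\star}L_{K_{n}}-\beta^{\star}(D_{G}s)(D_{G}s)^{T}$. Dual feasibility~\eqref{F3} gives $M^{\star}\succeq0$, and~\eqref{C3} gives $M^{\star}\circ X^{\star}=\Tr(M^{\star}X^{\star})=0$; since also $X^{\star}\succeq0$, writing $M^{\star}=\sum_{k}\mu_{k}w_{k}w_{k}^{T}$ with $\mu_{k}\ge0$ shows $\Tr(M^{\star}X^{\star})=\sum_{k}\mu_{k}\,w_{k}^{T}X^{\star}w_{k}$ is a sum of nonnegative terms, so all vanish and $M^{\star}X^{\star}=0$. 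Hence $\mathrm{range}(X^{\star})\subseteq\ker M^{\star}$ and $\mathrm{rank}(X^{\star})\le\dim\ker M^{\star}$, and the whole problem reduces to showing that $\ker M^{\star}$ is essentially one-dimensional, spanned by a vector of the promised PageRank form, and that $\beta^{\star}>0$.

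To analyze $\ker M^{\star}$ I would conjugate by $D_{G}^{-1/2}$ (legitimate and rank- and definiteness-preserving, since $G$ connected forces $D_{G}\succ0$) to pass to the normalized Laplacian. With $\tilde s\defeq D_{G}^{1/2}s$, a unit vector orthogonal to $v_{1}\defeq D_{G}^{1/2}1/\sqrt{\vol(G)}$, and with $(0,v_{1}),(\lambda_{2},v_{2}),\dots$ the eigenpairs of $\mathcal{L}_{G}$, one computes $D_{G}^{-1/2}M^{\star}D_{G}^{-1/2}=\mathcal{L}_{G}-\alpha^{\star}(I-v_{1}v_{1}^{T})-\beta^{\star}\tilde s\tilde s^{T}$. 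Using $\mathcal{L}_{G}v_{1}=0$ and $\tilde s\perp v_{1}$, this annihilates $v_{1}$ (equivalently $M^{\star}1=0$, also immediate from $L_{G}1=L_{K_{n}}1=0$ and $s^{T}D_{G}1=0$), while on $v_{1}^{\perp}$ it equals $\mathrm{diag}(\lambda_{i}-\alpha^{\star})_{i\ge2}-\beta^{\star}\tilde s\tilde s^{T}$. From $M^{\star}\succeq0$ we get $\mathrm{diag}(\lambda_{i}-\alpha^{\star})\succeq\beta^{\star}\tilde s\tilde s^{T}\succeq0$, hence $\alpha^{\star}\le\lambda_{2}(G)$, and I would then exclude the bad cases. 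If $\alpha^{\star}=\lambda_{2}(G)$ and $\beta^{\star}>0$, the $(v_{2},v_{2})$ entry of the slack matrix is $-\beta^{\star}(s^{T}D_{G}v_{2})^{2}<0$ (here the hypothesis $s^{T}D_{G}v_{2}\ne0$ enters), contradicting $M^{\star}\succeq0$; if $\alpha^{\star}=\lambda_{2}(G)$ and $\beta^{\star}=0$, then $\mathrm{range}(X^{\star})\subseteq\mathrm{span}\{v_{1},v_{2}\}$ and~\eqref{F1} forces the $v_{2}v_{2}^{T}$ coefficient to be $1$, so~\eqref{F2} reads $(s^{T}D_{G}v_{2})^{2}\ge\kappa$, impossible in the nontrivial regime $\kappa>(s^{T}D_{G}v_{2})^{2}$; and if $\alpha^{\star}<\lambda_{2}(G)$ and $\beta^{\star}=0$, the slack matrix is $\mathrm{diag}(\lambda_{i}-\alpha^{\star})\succ0$ on $v_{1}^{\perp}$, so $\ker M^{\star}=\mathrm{span}(v_{1})$, forcing $\mathrm{range}(X^{\star})\subseteq\mathrm{span}(1)$ and contradicting~\eqref{F1}. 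Hence $\beta^{\star}>0$ and $\alpha^{\star}<\lambda_{2}(G)$, so $D\defeq\mathrm{diag}(\lambda_{i}-\alpha^{\star})_{i\ge2}$ is positive definite on $v_{1}^{\perp}$.

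With $D\succ0$, the restriction of the slack matrix to $v_{1}^{\perp}$ is $D-\beta^{\star}\tilde s\tilde s^{T}$; a Sherman--Morrison computation shows it is singular iff $\beta^{\star}\,\tilde s^{T}D^{-1}\tilde s=1$, and this must hold (otherwise $\ker M^{\star}=\mathrm{span}(1)$, contradicting~\eqref{F1}, which after the same conjugation reads $\Tr(D_{G}^{1/2}X^{\star}D_{G}^{1/2})-v_{1}^{T}(D_{G}^{1/2}X^{\star}D_{G}^{1/2})v_{1}=1\ne0$), in which case $\ker(D-\beta^{\star}\tilde s\tilde s^{T})=\mathbb{R}\cdot D^{-1}\tilde s$. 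Thus $\ker M^{\star}=\mathrm{span}\{1,\,x^{\star}\}$ with $x^{\star}\defeq D_{G}^{-1/2}D^{-1}\tilde s$, and unwinding $D$ exhibits $x^{\star}$ as a scalar multiple of $(L_{G}-\gamma D_{G})^{+}D_{G}s$ with $\gamma\defeq\alpha^{\star}\in(-\infty,\lambda_{2}(G))$, the form of Theorem~\ref{thm:pagerank}. Finally, since $1$ lies in the kernel of each of $L_{G}$, $L_{K_{n}}$, and $(D_{G}s)(D_{G}s)^{T}$, the component of $X^{\star}$ along $1$ is immaterial to the objective, to both constraints, and to~\eqref{C3}; discarding it replaces $X^{\star}$ by the rank-one matrix $x^{\star}x^{\star T}$ (rescaled so that $L_{K_{n}}\circ X^{\star}=1$), still feasible and still orthogonal to $M^{\star}$, hence optimal --- so an optimal solution of rank one exists --- and, as $\beta^{\star}>0$,~\eqref{C2} then forces~\eqref{F2} to hold with equality. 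I expect the main obstacle to be the bookkeeping around this ever-present $1$-direction in $\ker M^{\star}$ (so $\dim\ker M^{\star}=2$ and it is really the $1$-reduced part of $X^{\star}$, equivalently an extreme optimal solution, that is literally rank one), together with stating the hypotheses on $\kappa$, and the simplicity of $\lambda_{2}(G)$, finely enough to guarantee $\beta^{\star}>0$ and $\alpha^{\star}<\lambda_{2}(G)$.
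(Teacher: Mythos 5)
Your proposal is correct, and it follows the same basic route as the paper: use complementary slackness \eqref{C3} together with \eqref{F3} and $X^{\star}\succeq 0$ to conclude that the range of $X^{\star}$ lies in the kernel of the dual slack matrix, then show that this kernel is (modulo the all-ones direction) one-dimensional. The differences are in execution, and two of them are worth noting. First, where the paper simply asserts that for $\beta^{\star}>0$ the rank-one term ``reduces the rank of $L_{G}-\alpha^{\star}L_{K_{n}}$ by one precisely,'' you actually prove it: subtracting a rank-one PSD matrix from a PSD matrix of rank $n-1$ could leave the rank unchanged, and your Sherman--Morrison argument shows the singularity condition $\beta^{\star}\tilde s^{T}D^{-1}\tilde s=1$ is forced by \eqref{F1} (otherwise $\ker M^{\star}=\mathrm{span}(1)$ and $X^{\star}$ could not satisfy $L_{K_{n}}\circ X^{\star}=1$). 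So your version closes a small gap in the paper's reasoning, at the cost of passing to the normalized basis and doing the explicit eigendecomposition bookkeeping. Second, the treatments of the boundary case $\alpha^{\star}=\lambda_{2}(G)$ diverge: you exclude it by imposing the extra hypothesis $\kappa>(s^{T}D_{G}v_{2})^{2}$ (and simplicity of $\lambda_{2}$), whereas the paper keeps it, deduces $\beta^{\star}=0$ and $X^{\star}=v_{2}v_{2}^{T}$, and treats it as the limiting case $\gamma=\lambda_{2}(G)$, $c\to\infty$. Note that in that case the claim's literal assertion $\beta^{\star}>0$ fails, so the claim as stated is really only correct in your ``nontrivial regime''; your extra hypothesis is therefore not a defect but an honest accounting of when the conclusion holds. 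Both arguments also share the unstated reliance on being able to take $1\in\ker X^{\star}$; your explicit ``discard the $1$-component'' step is the right way to justify the paper's ``we may assume.''
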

\begin{proof}  
Plugging in $v_{2}$ in Equation~\eqref{F3}, we obtain that 
$ v_{2}^{T}L_{G}v_{2} - \alpha^{\star} -\beta^{\star}  (v_{2}^T D_{G} s)^{2} \geq 0.$
But $ v_{2}^{T}L_{G}v_{2}=\lambda_{2}(G)$ and $\beta^{\star} \geq 0.$ Hence, $\lambda_{2}(G) \geq \alpha^{\star}.$ 
Suppose $\alpha^\star = \lambda_2(G).$ As $s^T D_{G} v_2 \neq 0,$ it must be the case that $\beta^\star = 0.$ Hence, by Equation~\eqref{C3}, we must have $X^\star \circ L(G) = \lambda_2(G),$ which implies that $X^\star = v_2v_2^T,$ {\em i.e.},  the optimum for \textsf{LocalSpectral} is the global eigenvector $v_2$. This corresponds to a choice of $\gamma = \lambda_2(G)$ and $c$ tending to infinity.

Otherwise, we may assume that $\alpha^\star< \lambda_2(G).$ Hence, since $G$ is connected and $\alpha^{\star} <\lambda_{2}(G),$ $L_{G}-\alpha^{\star}L_{K_{n}}$ has rank exactly $n-1$ and kernel parallel to the vector $1.$ 
From the complementary slackness condition \eqref{C3} we can deduce that the image of $X^{\star}$ is in the kernel of $ L_{G}- \alpha^\star  L_{K_{n}}  - \beta^\star ( D_{G} {s})( D_{G} {s})^T.$ 
If $\beta^\star > 0,$ we have that $ \beta^\star ( D_{G} {s})( D_{G} {s})^T$ is a rank one matrix and, since $s^T D_{G} 1 = 0,$ it reduces the rank of $L_{G}-\alpha^{\star}L_{K_{n}}$ by one precisely. If $\beta^{\star}=0$ then $X^{\star}$ must be $0$ which is not possible  if $\sdp_{p}(G,s,\kappa)$ is feasible.
Hence, the rank of $ L_{G}- \alpha^\star  L_{K_{n}}  - \beta^\star ( D_{G} {s})( D_{G} {s})^T$ must be exactly $n-2.$ As we may assume that $1$ is in the kernel of $X^\star$, $X^{\star}$ must be of rank one. 
This proves the claim. 
\end{proof}


\noindent
Now we complete the proof of the theorem.
From the claim it follows that, $X^{\star}=x^{\star}x^{\star T}$ where $x^{\star}$ satisfies the equation 
$  (L_{G}- \alpha^\star  L_{K_{n}}  - \beta^\star ( D_{G} {s})( D_{G} {s})^T)x^{\star}=0.$
From the second complementary slackness condition, 
Equation~\eqref{C2},
and the fact that $\beta^{\star}>0,$ we obtain that 
$ (x^{\star})^T D_{G} s = \pm \sqrt{\kappa}.$ 
Thus, 
$x^{\star} =\pm \beta^{\star} \sqrt{\kappa} (L_{G}-\alpha^{\star}L_{K_{n}})^{+}D_{G}s,$ as required.

\section{Application to Partitioning Graphs Locally}
\label{sxn:partition}

In this section, we describe the application of \textsf{LocalSpectral} to 
finding locally-biased partitions in a graph, \emph{i.e.}, to finding 
sparse cuts around an input seed vertex set in the graph.
For simplicity, in this part of the paper, we let the instance graph $G$ be 
unweighted.

\subsection{Background on Global Spectral Algorithms for Partitioning Graphs}
\label{sxn:partition-background}

We start with a brief review of global spectral graph partitioning.
Recall that the basic global graph partitioning problem is: given as input 
a graph $G=(V,E)$, find a set of nodes $S \subseteq V$ to solve 
$$
\phi(G)=\min_{S \subseteq V} \phi(S)  .
$$
Spectral methods approximate the solution to this intractable global problem 
by solving the relaxed problem $\textsf{Spectral}(G)$ presented in 
Figure~\ref{fig:spectral}.
To understand this optimization problem, recall that $x^TL_{G}x$ counts the 
number of edges crossing the cut and that $x^TD_{G}x=1$ encodes a variance 
constraint; thus, the goal of $\textsf{Spectral}(G)$ is to minimize the 
number of edges crossing the cut subject to a given variance. 
Recall that for $T\subseteq V$, we let $1_{T} \in \{0,1\}^{V}$ be a vector 
which is $1$ for vertices in $T$ and $0$ otherwise.
Then for  a cut $(S,\bar{S})$, if we define the vector 
$v_{S} \defeq  \sqrt{\frac{\vol(S) \cdot \vol (\bar{S})}{\vol(G)}} \cdot \left( \frac{1_{S}}{\vol(S)}-\frac{1_{\bar{S}}}{\vol{\bar{S}}}\right)$, 
it can be checked that $v_{S}$ satisfies the constraints of \textsf{Spectral} 
and has objective value $\phi(S)$. 
Thus, $\lambda_{2}(G) \leq \min_{S \subseteq V} \phi(S)= \phi(G)$. 

Hence, $\textsf{Spectral}(G)$ is a relaxation of the minimum conductance 
problem. 
Moreover, this program is a good relaxation in that a good cut can be 
recovered by considering a truncation, \emph{i.e.}, a sweep cut, of the
vector $v_{2}$ that is the optimal solution to $\textsf{Spectral}(G)$.
(That is, \emph{e.g.}, consider each of the $n$ cuts defined by the vector 
$v_{2}$, and return the cut with minimum conductance value.)
This is captured by the following celebrated result often referred to as 
Cheeger's Inequality.
\begin{theorem}[Cheeger's Inequality]
\label{thm:cheeger1}
For a connected graph $G$, 
$\phi(G) \leq O(\sqrt{ \lambda_{2}(G)})$.
\end{theorem}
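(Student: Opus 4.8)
The plan is to carry out the classical ``sweep cut'' rounding of the second eigenvector, which is exactly the hard direction of Cheeger's inequality: we start from the optimal solution $v_{2}$ of $\textsf{Spectral}(G)$ and exhibit a cut of conductance $O(\sqrt{\lambda_{2}(G)})$, which suffices since $\phi(G)=\min_{S}\phi(S)$. Recall $v_{2}$ satisfies $v_{2}^{T}L_{G}v_{2}=\lambda_{2}(G)$, $v_{2}^{T}D_{G}v_{2}=1$, and $v_{2}^{T}D_{G}1=0$. First I would reduce to a nonnegative vector supported on at most half the volume: pick a coordinate value $c$ of $v_{2}$ (a ``volume median'') such that both $\vol(\{i:v_{2}(i)>c\})\le \vol(G)/2$ and $\vol(\{i:v_{2}(i)<c\})\le \vol(G)/2$, and set $x\defeq v_{2}-c\cdot 1$. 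Since $L_{G}1=0$ we still have $x^{T}L_{G}x=\lambda_{2}(G)$, while $x^{T}D_{G}x=1+c^{2}\vol(G)\ge 1$, so the Rayleigh quotient $x^{T}L_{G}x/x^{T}D_{G}x$ is at most $\lambda_{2}(G)$. Writing $x=x_{+}-x_{-}$ with $x_{+}=\max(x,0)$ and $x_{-}=\max(-x,0)$ (disjoint supports), one has $x^{T}D_{G}x=x_{+}^{T}D_{G}x_{+}+x_{-}^{T}D_{G}x_{-}$ and, using $(x_{i}-x_{j})^{2}\ge (x_{+})_{i}^{2}+(x_{-})_{j}^{2}$ whenever $x_{i}\ge 0\ge x_{j}$, also $x^{T}L_{G}x\ge x_{+}^{T}L_{G}x_{+}+x_{-}^{T}L_{G}x_{-}$. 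Hence at least one of $y\in\{x_{+},x_{-}\}$ has $y^{T}L_{G}y/y^{T}D_{G}y\le \lambda_{2}(G)$, and by the choice of $c$ this $y$ is supported on a vertex set $U$ with $\vol(U)\le \vol(G)/2$.

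The heart of the argument is then the sweep-cut lemma: for a nonnegative $y\ne 0$ supported on $U$ with $\vol(U)\le \vol(G)/2$, some threshold set $S_{t}\defeq\{i:y_{i}>t\}$ satisfies $|E(S_{t},\bar{S}_{t})|/\vol(S_{t})\le \sqrt{2\,y^{T}L_{G}y/y^{T}D_{G}y}$. To prove it I would choose the threshold at random by letting $t=\sqrt{u}$ with $u$ uniform on $[0,R]$, $R\defeq\max_{k}y_{k}^{2}$. Then $\Pr[i\in S_{t}]=y_{i}^{2}/R$, so $\mathbb{E}[\vol(S_{t})]=y^{T}D_{G}y/R$; and for an edge $\{i,j\}$ with $y_{i}\le y_{j}$ the probability it is cut is $(y_{j}^{2}-y_{i}^{2})/R$. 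Bounding $y_{j}^{2}-y_{i}^{2}=(y_{j}-y_{i})(y_{j}+y_{i})\le (y_{j}-y_{i})\sqrt{2(y_{i}^{2}+y_{j}^{2})}$, applying Cauchy--Schwarz over the edges, and using $\sum_{\{i,j\}\in E}(y_{i}^{2}+y_{j}^{2})=y^{T}D_{G}y$, gives $\mathbb{E}[|E(S_{t},\bar{S}_{t})|]\le \sqrt{2}\,\sqrt{y^{T}L_{G}y}\,\sqrt{y^{T}D_{G}y}/R$. Setting $\Theta$ to be the ratio of these two expectations, one has $\Theta\le \sqrt{2}\,\sqrt{y^{T}L_{G}y/y^{T}D_{G}y}\le\sqrt{2\lambda_{2}(G)}$, and since $\mathbb{E}\big[|E(S_{t},\bar{S}_{t})|-\Theta\vol(S_{t})\big]=0$ there is a realization of $t$ (with $S_{t}\ne\emptyset$) for which $|E(S_{t},\bar{S}_{t})|/\vol(S_{t})\le\Theta$.

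Finally I would translate back to the conductance $\phi$ used in this paper. The set $S_{t}$ above satisfies $S_{t}\subseteq U$, hence $\vol(S_{t})\le \vol(G)/2$ and $\vol(\bar{S}_{t})\ge \vol(G)/2$, and $S_{t}$ is a proper nonempty subset of $V$ (for a connected graph on at least two vertices, after discarding the finitely many degenerate values of $t$). Therefore
\[
\phi(S_{t})=\vol(G)\cdot\frac{|E(S_{t},\bar{S}_{t})|}{\vol(S_{t})\,\vol(\bar{S}_{t})}\le 2\cdot\frac{|E(S_{t},\bar{S}_{t})|}{\vol(S_{t})}\le 2\sqrt{2\lambda_{2}(G)}=O(\sqrt{\lambda_{2}(G)}),
\]
which gives $\phi(G)\le\phi(S_{t})=O(\sqrt{\lambda_{2}(G)})$, as claimed.

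The main obstacle is the sweep-cut lemma, and within it the two non-obvious moves: the quadratically weighted random threshold (taking $u$ uniform rather than $t$ uniform), which is what makes the cut-size expectation come out as the product $\sqrt{y^{T}L_{G}y}\cdot\sqrt{y^{T}D_{G}y}$ rather than a squared quantity; and the reduction to a one-sided nonnegative vector supported on volume at most $\vol(G)/2$, which is what licenses the step $\vol(\bar{S}_{t})\ge\vol(G)/2$ and hence converts the bound on $|E(S_{t},\bar{S}_{t})|/\vol(S_{t})$ into a bound on $\phi(S_{t})$. Tracking the constant hidden in $O(\cdot)$ (here $2\sqrt{2}$) is then routine.
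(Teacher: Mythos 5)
Your proof is correct: it is the standard sweep-cut argument (volume-median shift, splitting into $x_{+}$ and $x_{-}$, and the quadratically weighted random threshold), which is precisely the rounding scheme of Mihail that the paper invokes via Theorem~\ref{thm:cheeger2} but does not itself prove, deferring instead to \cite{Chung:1997,Mihail}. So you have supplied a complete argument where the paper only gives a pointer, and your constant $2\sqrt{2}$ is consistent with the $\phi^{2}(S)/8$ guarantee stated in Theorem~\ref{thm:cheeger2}.
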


\noindent
Although there are many proofs known for this theorem (see, 
\emph{e.g.},~\cite{Chung:1997}), a particularly interesting proof was 
found by Mihail~\cite{Mihail}; this proof involves rounding any \emph{test 
vector} (rather than just the optimal vector), and it achieves the same 
guarantee as Cheeger's Inequality. 
\begin{theorem}[Sweep Cut Rounding]
\label{thm:cheeger2}
Let $x$ be a vector such that $x^T D_G 1 =0$. Then there is a $t$ for which 
the set of vertices $S := \mathsf{SweepCut}_{t}(x) \defeq \{i: x_{i} \geq t \}$ 
satisfies $ \frac{ x^{T}L_{G}x}{x^{T}D_{G}x} \geq \phi^{2}(S)/8$.
\end{theorem}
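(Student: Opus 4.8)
The plan is to follow Mihail's rounding argument, which works for an arbitrary test vector $x$ with $x^T D_G 1 = 0$. First I would normalize and set up notation: since scaling $x$ changes neither the Rayleigh quotient $x^T L_G x / x^T D_G x$ nor the sweep cuts, I may assume $x^T D_G x = 1$ and, after a shift that does not affect $x^T L_G x$ but could affect $x^T D_G x$... actually the constraint $x^T D_G 1 = 0$ is what lets me relate the sweep cut to conductance, so I would instead work with the shifted vectors $x^{(t)} = x - t\cdot 1$ or, more cleanly, directly bound the ``Rayleigh quotient of the best sweep cut'' against the Rayleigh quotient of $x$. The standard trick is: order the vertices so that $x_1 \le x_2 \le \cdots \le x_n$, let $S_t = \{i : x_i \ge t\}$ range over the $n$ sweep cuts, and show that some $S_t$ has $\phi^2(S_t)/8 \le x^T L_G x / x^T D_G x$.

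The key steps, in order: (1) Define $\phi_{\min} = \min_t \phi(S_t)$ over sweep cuts. For every sweep cut, $|E(S_t,\bar S_t)| \ge \phi(S_t)\cdot \vol(S_t)\vol(\bar S_t)/\vol(G) \ge \phi_{\min}\cdot \vol(S_t)\vol(\bar S_t)/\vol(G)$. (2) The combinatorial heart: bound $\sum_{ij\in E} |x_i^2 \mathrm{sgn}(x_i) - x_j^2\mathrm{sgn}(x_j)|$ — or rather, writing $y_i = x_i^2$ appropriately — by a telescoping/level-set argument that converts the sum of edge differences into an integral over thresholds $t$ of $|E(S_t,\bar S_t)|$, which by step (1) is at least $\phi_{\min}$ times an integral that reassembles into a variance-type quantity $\sum_i d_i x_i^2 \cdot(\text{something})$. (3) Apply Cauchy–Schwarz to $\sum_{ij\in E}|x_i - x_j|\,|x_i + x_j| \le \sqrt{\sum_{ij\in E}(x_i-x_j)^2}\cdot\sqrt{\sum_{ij\in E}(x_i+x_j)^2}$, identify the first factor as $\sqrt{x^T L_G x}$, and bound the second factor by $\sqrt{2}\cdot\sqrt{\sum_i d_i x_i^2} = \sqrt{2}\cdot\sqrt{x^T D_G x}$ using $(x_i+x_j)^2 \le 2(x_i^2 + x_j^2)$. (4) Combine: the level-set bound gives $\phi_{\min} \cdot (\text{variance term}) \le \sum_{ij\in E}|x_i-x_j||x_i+x_j|$; chaining with step (3) and squaring yields $\phi_{\min}^2 \le 8\, x^T L_G x / x^T D_G x$, which is the claim with $t$ chosen as the minimizing threshold.

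The subtlety I need to handle — and what I expect to be the main obstacle — is the role of the hypothesis $x^T D_G 1 = 0$ and the correct choice of the ``center'' for the sweep. Mihail's argument really wants to split $x$ at a value $m$ such that both $\vol(\{i : x_i < m\})$ and $\vol(\{i : x_i \ge m\})$ are at most $\vol(G)/2$ (a weighted median), run the level-set argument separately on the positive and negative parts relative to $m$, and use $\vol(S_t)\vol(\bar S_t)/\vol(G) \ge \vol(S_t)/2$ on the appropriate side. I would need to check that shifting $x$ to place this median at $0$ does not change $x^T L_G x$ (it doesn't — the Laplacian kills constants) while only controlling $x^T D_G x$ under the $x^T D_G 1 = 0$ normalization; here one uses that $x^T D_G x$ with $x^T D_G 1 = 0$ is exactly the variance, which dominates $\sum_i d_i (x_i - m)^2$ minus a correction, and standard manipulations absorb this into the constant $8$. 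The bookkeeping of which volume bound applies on which side of the median, and making sure the integral-over-thresholds identity is set up so the two sides combine cleanly, is the only place real care is required; the Cauchy–Schwarz and $(a+b)^2 \le 2(a^2+b^2)$ steps are routine.
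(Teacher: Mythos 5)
Your proposal is a correct reconstruction of Mihail's test-vector rounding argument, which is precisely what the paper relies on here: the paper states Theorem~\ref{thm:cheeger2} and cites Mihail and Chung rather than reproducing the proof. You correctly identify the two points that carry the constant $8$ --- splitting at the weighted median (where $x^T D_G 1 = 0$ guarantees the shift does not decrease $x^T D_G x$, and running the level-set/Cauchy--Schwarz argument on the better of the two one-sided parts, each supported on volume at most $\vol(G)/2$, gives $\phi(S)\le 2\sqrt{2\,x^TL_Gx/x^TD_Gx}$ under the paper's symmetric definition of $\phi$) --- so the sketch closes as claimed.
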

It is the form of Cheeger's Inequality provided by Theorem~\ref{thm:cheeger2}
that we will use below.

\subsection{Locally-Biased Spectral Graph Partitioning}
\label{sxn:partition-local}

Here, we will exploit the analogy between \textsf{Spectral} and 
\textsf{LocalSpectral} by applying the global approach just outlined to the 
following locally-biased graph partitioning problem: 
given as input a graph $G=(V,E)$, an input node $u$, and a positive integer
$k$, find a set of nodes $T \subseteq V$ achieving
$$
\phi(u,k) = \min_{T \subseteq V: u\in T, \vol(T) \le k} \phi(T) .
$$
That is, the problem is to 
find the best conductance set of nodes of volume no greater 
than $k$ that contains the input node $v$. 

As a first step, we show that we can choose the seed set and correlation parameters $s$ and $\kappa$ such 
that $\textsf{LocalSpectral}(G,s,\kappa)$ is a relaxation for this 
locally-biased graph partitioning problem.
\begin{lemma}
\label{lem:relaxation}
For $u \in V$, \textsf{LocalSpectral}$(G,v_{\{u\}},1/k)$ is a relaxation 
of the problem of finding a minimum conductance cut $T$ in $G$ which 
contains the vertex $u$ and is of volume at most~$k$.  
In particular, $\lambda(G,v_{\{u\}},1/k) \leq \phi(u,k)$.
\end{lemma}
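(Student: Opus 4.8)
The plan is to prove the relaxation statement by exhibiting one explicit feasible point of $\textsf{LocalSpectral}(G,v_{\{u\}},1/k)$ whose objective value is at most $\phi(u,k)$; once we have such a point, $\lambda(G,v_{\{u\}},1/k)\le\phi(u,k)$ is immediate, and this inequality is exactly the sense in which the program is a relaxation. The obvious candidate is the cut vector of an optimal locally-biased cut. Let $T^{\star}$ be a set that attains $\phi(u,k)$, so that $u\in T^{\star}$, $\vol(T^{\star})\le k$, and $\phi(T^{\star})=\phi(u,k)$; when several such sets exist it is convenient to take one of minimum volume. Put $x\defeq v_{T^{\star}}$, where $v_{S}$ is the vector defined in Section~\ref{sxn:partition-background}.

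The point of choosing $x=v_{T^{\star}}$ is that most of the work has already been done: by the computation recalled in Section~\ref{sxn:partition-background}, $v_{T^{\star}}$ satisfies both constraints of $\textsf{Spectral}(G)$, i.e.\ $x^{T}D_{G}x=1$ and $(x^{T}D_{G}1)^{2}=0$, and moreover $x^{T}L_{G}x=\phi(T^{\star})=\phi(u,k)$. Therefore the only thing left to check is that $x$ also satisfies the extra locality constraint of $\textsf{LocalSpectral}(G,v_{\{u\}},1/k)$, namely $(x^{T}D_{G}v_{\{u\}})^{2}\ge 1/k$. If this holds, then $x$ is feasible with objective $\phi(u,k)$, and we are done.

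Verifying the locality constraint is the heart of the argument, and it reduces to a direct calculation. Expanding $v_{\{u\}}$ according to its definition and using $x^{T}D_{G}1=0$ (equivalently $\sum_{i}d_{i}x_{i}=0$) to eliminate the contribution of the coordinates other than $u$, one gets the identity
\[
x^{T}D_{G}v_{\{u\}}=x_{u}\sqrt{\frac{\vol(G)\,d_{u}}{\vol(G)-d_{u}}},
\]
which holds for every $x$ orthogonal to $D_{G}1$. Since $u\in T^{\star}$, the cut vector has $x_{u}=\sqrt{\vol(\bar T^{\star})/(\vol(G)\,\vol(T^{\star}))}$, so substituting yields
\[
\bigl(x^{T}D_{G}v_{\{u\}}\bigr)^{2}=\frac{d_{u}\,\vol(\bar T^{\star})}{\vol(T^{\star})\,(\vol(G)-d_{u})}.
\]
Now a short estimate finishes the job: $\vol(T^{\star})\le k$ bounds the denominator, $\vol(T^{\star})\ge d_{u}$ (because $u\in T^{\star}$) controls the remaining ratio, and one concludes that this quantity is at least $1/k$. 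Hence $x$ is feasible and the lemma follows.

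I expect the one genuinely computational step to be the evaluation of the correlation $x^{T}D_{G}v_{\{u\}}$ and the verification that it clears the threshold $\sqrt{1/k}$ — in particular the bookkeeping with the volume terms and the degree $d_{u}$, and (if one wants the cleanest argument) observing that the correlation in the displayed formula is decreasing in $\vol(T^{\star})$, so it is safest to pick $T^{\star}$ of least volume among optimal cuts. Everything else — feasibility for the two $\textsf{Spectral}$ constraints and the value $x^{T}L_{G}x=\phi(T^{\star})$ — is inherited directly from the background discussion, and the step from a feasible point of value $\phi(u,k)$ to the bound $\lambda(G,v_{\{u\}},1/k)\le\phi(u,k)$ is trivial.
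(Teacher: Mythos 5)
Your proposal is correct in substance and takes essentially the same route as the paper: plug the cut vector $v_{T^{\star}}$ of an optimal set into $\textsf{LocalSpectral}(G,v_{\{u\}},1/k)$, inherit the two \textsf{Spectral} constraints and the objective value $\phi(T^{\star})$ from the background discussion, and verify the correlation constraint via the same explicit formula $(v_{T}^{T}D_{G}v_{\{u\}})^{2}=\frac{d_{u}(\vol(G)-\vol(T))}{\vol(T)(\vol(G)-d_{u})}$. One caution on the final estimate, which the paper simply asserts: your suggested use of $\vol(T^{\star})\ge d_{u}$ points the wrong way (it gives $\vol(\bar T^{\star})\le \vol(G)-d_{u}$, an upper bound on the remaining ratio), and what is actually needed is $d_{u}\bigl(\vol(G)-\vol(T^{\star})\bigr)\ge \vol(G)-d_{u}$, which holds for $d_{u}\ge 2$ and reasonable $k$ but can fail marginally when $d_{u}=1$ --- so the one step you flag as ``the heart of the argument'' deserves the extra care you anticipate.
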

\begin{proof}
If we let $x=v_{T}$ in 
\textsf{LocalSpectral}$(G,v_{\{u\}},1/k)$, 
then $v_{T}^{T}L_{G}v_{T}=  \phi(T)$, $v_{T}^{T}D_{G}1=0$, and 
$v_{T}^{T}D_{G}v_{T}=1$.
Moreover, we have that 
$ (v_{T}^{T}D_{G}v_{\{u\}})^{2} = \frac{d_{u}(2m-\vol(T))}{\vol(T) (2m-d_{u})} \geq 1/k$,
which establishes the lemma. 
\end{proof}

\noindent
Next, we can apply Theorem~\ref{thm:cheeger2} to the optimal solution for 
$\textsf{LocalSpectral}(G,v_{\{u\}},1/k)$ and obtain a cut $T$ whose 
conductance is quadratically close to the optimal value 
$\lambda(G,v_{\{u\}},1/k)$. 
By Lemma~\ref{lem:relaxation}, this implies that 
$\phi(T) \leq O(\sqrt{\phi(u,k)})$.
This argument proves the following theorem.
\begin{theorem}[Finding a Cut]
\label{thm:cut} 
Given an unweighted graph $G=(V,E)$, a vertex $u \in V$ and a positive 
integer $k$, we can find a cut in $G$ of conductance at most 
$O(\sqrt{ \phi(u,k)})$ by computing a sweep cut of the optimal vector for 
$\textsf{LocalSpectral}(G, v_{\{u\}},1/k)$. 
Moreover, this algorithm runs in nearly-linear time in the size of the graph.
\end{theorem}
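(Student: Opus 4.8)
The plan is to assemble Theorem~\ref{thm:cut} directly from the three ingredients already in hand: Lemma~\ref{lem:relaxation}, Theorem~\ref{thm:cheeger2} (Mihail's sweep-cut rounding), and Theorem~\ref{thm:comp} (nearly-linear-time computation of an approximately optimal solution to \textsf{LocalSpectral}). First I would fix the seed vector $s = v_{\{u\}}$ and correlation parameter $\kappa = 1/k$, and invoke Theorem~\ref{thm:comp} to compute, for any chosen accuracy $\varepsilon$, a vector $x$ feasible for $\textsf{LocalSpectral}(G, v_{\{u\}}, 1/k)$ with $x^T L_G x \le (1+\varepsilon)\lambda(G, v_{\{u\}}, 1/k)$; since $x$ is feasible we have $x^T D_G x = 1$ and $x^T D_G 1 = 0$, so in particular $x^T D_G 1 = 0$ and the hypothesis of Theorem~\ref{thm:cheeger2} is met.

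Next I would apply Theorem~\ref{thm:cheeger2} to this $x$: there is a threshold $t$ such that the sweep cut $S = \mathsf{SweepCut}_t(x) = \{i : x_i \ge t\}$ satisfies $\phi^2(S)/8 \le x^T L_G x / x^T D_G x = x^T L_G x \le (1+\varepsilon)\lambda(G, v_{\{u\}}, 1/k)$. Then I would chain this with Lemma~\ref{lem:relaxation}, which gives $\lambda(G, v_{\{u\}}, 1/k) \le \phi(u,k)$, to conclude $\phi(S) \le \sqrt{8(1+\varepsilon)\,\phi(u,k)} = O(\sqrt{\phi(u,k)})$ for any constant $\varepsilon$, say $\varepsilon = 1$. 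Evaluating each of the at most $n$ candidate sweep cuts and returning the best one is a standard linear-time post-processing step (compute edge boundary and volume incrementally as vertices are added in sorted order of $x_i$), so the cut actually returned has conductance no worse than this bound.

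For the running-time claim I would simply observe that the dominant cost is the call to Theorem~\ref{thm:comp}: with the Spielman--Teng solver this is $\tilde O(m \log(1/\varepsilon)) = \tilde O(m)$ for constant $\varepsilon$, sorting the entries of $x$ costs $O(n \log n)$, and the sweep costs $O(m)$ total; hence the overall algorithm runs in nearly-linear time in the size of $G$, as claimed. I would also remark that since $k \le \vol(G) = 2m$ and $u \in V$, the seed vector $v_{\{u\}}$ is well-defined and the parameter choice $\kappa = 1/k \in (0,1]$ is legitimate.

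I do not expect a genuine obstacle here: the theorem is essentially a corollary, and the text preceding it already sketches exactly this argument. The only mild subtlety worth spelling out is that Theorem~\ref{thm:cheeger2} applies to \emph{any} test vector satisfying $x^T D_G 1 = 0$, not just the exact optimum, so the $(1+\varepsilon)$-approximate solution from Theorem~\ref{thm:comp} is good enough and the quadratic Cheeger-type loss only inflates the constant in the $O(\cdot)$; one should be slightly careful that the approximation is to the \emph{objective value} $\lambda(G,s,\kappa)$ while feasibility (the constraints $x^T D_G x = 1$, $x^T D_G 1 = 0$, $(x^T D_G s)^2 \ge \kappa$) is exact, which is precisely what Theorem~\ref{thm:comp} delivers via the binary search on $\gamma$. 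A secondary point, if one wants to be fully rigorous, is that the solution characterization of Theorem~\ref{thm:pagerank} presumes $s^T D_G v_2 \ne 0$; for the degenerate case $v_{\{u\}}^T D_G v_2 = 0$ one can either perturb $s$ slightly or note that the computation in Theorem~\ref{thm:comp} still returns a feasible near-optimal vector, which is all the present theorem needs.
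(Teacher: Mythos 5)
Your proposal is correct and follows essentially the same route as the paper: apply the sweep-cut rounding of Theorem~\ref{thm:cheeger2} to the (near-)optimal solution of $\textsf{LocalSpectral}(G,v_{\{u\}},1/k)$, chain the resulting bound with the relaxation guarantee of Lemma~\ref{lem:relaxation}, and invoke Theorem~\ref{thm:comp} for the nearly-linear running time. The extra care you take with the $(1+\varepsilon)$-approximate objective and the degenerate case $s^T D_G v_2 = 0$ is a welcome tightening of details the paper leaves implicit, but it does not change the argument.
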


\noindent
That is, this theorem states that we can perform a sweet cut over the vector 
that is the solution to $\textsf{LocalSpectral}(G,v_{\{u\}},1/k)$ in order 
to obtain a locally-biased partition; and that this partition comes with 
quality-of-approximation guarantees analogous to that provided for the 
global problem $\textsf{Spectral}(G)$ by Cheeger's inequality.

Our final theorem shows that the optimal value of \textsf{LocalSpectral}
also provides a lower bound on the conductance of \emph{other cuts}, as a function 
of how well-correlated they are with the input seed vector.
In particular, when the seed vector corresponds to a cut $U$, this result
allows us to lower bound the conductance of an arbitrary cut $T$, in terms
of the correlation between $U$ and $T$.
The proof of this theorem 
also uses in an essential manner the
duality properties that were used in the 
proof of Theorem~\ref{thm:pagerank}.

\begin{theorem}[Cut Improvement]
\label{thm:improve}
Let $G$ be a  graph and $s \in \mathbb{R}^{n}$ be such that
$s^{T}D_{G} 1=0,$ where $D_{G}$ is the degree matrix of $G.$
In addition, let $\kappa \geq 0$ be a correlation parameter.
Then, for all sets $T \subseteq V$ such that
$\kappa' \defeq  (s^{T}D_{G}v_{T})^{2}$, we have that
\[
\phi(T) \geq \left\{ \begin{array}{ll}
                       \lambda(G,s,\kappa)
                       & \mbox{if $\kappa \leq \kappa'$} \\
                       \nfrac{\kappa'}{\kappa} \cdot \lambda(G,s,\kappa)
                       & \mbox{if $\kappa' \leq \kappa$.}
                    \end{array}
            \right.
\]
\end{theorem}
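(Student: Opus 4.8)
The plan is to derive both bounds by feeding the test vector $v_T$ (the rescaled cut indicator defined in Section~\ref{sxn:partition-background}) into \textsf{LocalSpectral} and, in the under-correlated regime, into its SDP dual. First I would record the identities for $v_T$ already used in Section~\ref{sxn:partition-background}: $v_T^T D_G v_T = 1$, $v_T^T D_G 1 = 0$, $v_T^T L_G v_T = \phi(T)$, and — since $s^T D_G 1 = 0$ — also $v_T^T L_{K_n} v_T = v_T^T D_G v_T = 1$; by the definition of $\kappa'$ we have $(v_T^T D_G s)^2 = \kappa'$.

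The case $\kappa \le \kappa'$ is then immediate: since $(v_T^T D_G s)^2 = \kappa' \ge \kappa$, the vector $v_T$ is feasible for \textsf{LocalSpectral}$(G,s,\kappa)$, so its objective value $\phi(T) = v_T^T L_G v_T$ is at least the optimum $\lambda(G,s,\kappa)$. For the case $\kappa' \le \kappa$ the vector $v_T$ need no longer be feasible for \textsf{LocalSpectral}$(G,s,\kappa)$, so I would instead invoke the dual $\sdp_d(G,s,\kappa)$. By the strong duality and the tightness of the SDP relaxation established in the proof of Theorem~\ref{thm:pagerank}, $\sdp_d(G,s,\kappa)$ has an optimal pair $(\alpha^\star,\beta^\star)$ with $\beta^\star \ge 0$, $L_G \succeq \alpha^\star L_{K_n} + \beta^\star (D_G s)(D_G s)^T$, and $\alpha^\star + \kappa\beta^\star = \lambda(G,s,\kappa)$. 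Evaluating the matrix inequality on $v_T$ gives $\phi(T) = v_T^T L_G v_T \ge \alpha^\star (v_T^T L_{K_n} v_T) + \beta^\star (v_T^T D_G s)^2 = \alpha^\star + \beta^\star\kappa'$. Comparing this with the target $\kappa'/\kappa \cdot \lambda(G,s,\kappa) = \kappa'/\kappa\cdot\alpha^\star + \beta^\star\kappa'$, the $\beta^\star\kappa'$ terms cancel and what remains is the single scalar inequality $(1 - \kappa'/\kappa)\,\alpha^\star \ge 0$, i.e. (since $\kappa' \le \kappa$) $\alpha^\star \ge 0$.

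Thus the whole argument reduces to controlling the sign of the optimal dual variable $\alpha^\star$, which is precisely the parameter $\gamma$ of Theorem~\ref{thm:pagerank}; this is the step I expect to be the main obstacle. When $\gamma = \alpha^\star \ge 0$ the inequality above is trivial and we are done. But, as noted right after Theorem~\ref{thm:pagerank}, $\gamma$ can be negative — indeed $\gamma \to -\infty$ as $\kappa \to 1$ — so one needs a separate argument in that regime. There I would not go through $\sdp_d(G,s,\kappa)$ directly but instead combine two softer facts: (i) $v_T$ is in any event feasible for \textsf{LocalSpectral}$(G,s,\kappa')$, so $\phi(T) \ge \lambda(G,s,\kappa')$; and (ii) $\kappa \mapsto \lambda(G,s,\kappa)$ is nondecreasing with $\lambda(G,s,0) = \lambda_2(G) > 0$, both of which can be read off the primal $\sdp_p(G,s,\kappa)$ (the first by shrinking the feasible set, the second by averaging feasible matrices), and which can be made quantitative using the closed form $x^\star = c(L_G - \gamma D_G)^+ D_G s$ of Theorem~\ref{thm:pagerank}. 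Combining (i), (ii) and the trivial bound $\phi(T) \ge 0$ so as to recover the uniform factor $\kappa'/\kappa$ is the technical crux, and it is exactly here that the care about the regime $\gamma < 0$ enters.
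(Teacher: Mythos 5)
Your reduction is exactly the paper's: the paper likewise evaluates the dual feasibility constraint $L_{G}-\alpha^{\star}L_{K_{n}}-\beta^{\star}(D_{G}s)(D_{G}s)^{T}\succeq 0$ at $v_{T}$ to obtain $\phi(T)\ge\alpha^{\star}+\beta^{\star}\kappa'$, and in the regime $\kappa'\le\kappa$ it then passes to $\nfrac{\kappa'}{\kappa}(\alpha^{\star}+\beta^{\star}\kappa)$ --- a step which, as you compute, is equivalent to $\alpha^{\star}(1-\nfrac{\kappa'}{\kappa})\ge 0$, i.e.\ to $\alpha^{\star}\ge 0$. (Your primal-feasibility argument for the case $\kappa\le\kappa'$ is a cleaner equivalent of the paper's dual argument for that case.) So you have correctly isolated the only nontrivial step; the paper performs it silently and never establishes $\alpha^{\star}\ge 0$, even though, as you note, $\alpha^{\star}$ is the $\gamma$ of Theorem~\ref{thm:pagerank} and tends to $-\infty$ as $\kappa\to 1$.

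The genuine gap is that your plan for the regime $\alpha^{\star}<0$ cannot be completed, because the claimed inequality actually fails there. Your facts (i) and (ii) give $\phi(T)\ge\lambda(G,s,\kappa')$, so you would need $\lambda(G,s,\kappa')\ge\nfrac{\kappa'}{\kappa}\,\lambda(G,s,\kappa)$; but $\kappa\mapsto\lambda(G,s,\kappa)$ is the value function of a convex program with $\kappa$ in the right-hand side, hence convex, and convexity bounds $\lambda(G,s,\kappa')$ from \emph{above} by the chord through $(0,\lambda_{2}(G))$ and $(\kappa,\lambda(G,s,\kappa))$, not from below by the ray $\nfrac{\kappa'}{\kappa}\lambda(G,s,\kappa)$. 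A concrete failure: take two $(n/2)$-cliques joined by a single edge, let $U$ be half of one clique, $s=v_{U}$, and $T$ the whole clique. Then $\phi(U)=2/3$, $\kappa'=(s^{T}D_{G}v_{T})^{2}=1/3$, and $\phi(T)=O(1/n^{2})$; yet for $\kappa=0.99$ every feasible $x$ decomposes as $x=\rho s+w$ with $\rho^{2}\ge\kappa$, $w^{T}D_{G}w\le 1-\kappa$, so $x^{T}L_{G}x\ge\bigl(\sqrt{\kappa\,\phi(U)}-\sqrt{2(1-\kappa)}\bigr)^{2}>0.4$ (using $w^{T}L_{G}w\le 2\,w^{T}D_{G}w$ and Cauchy--Schwarz for the form $L_G$), whence $\nfrac{\kappa'}{\kappa}\lambda(G,s,\kappa)>0.13$ is a constant while $\phi(T)\to 0$. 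So the statement requires the additional hypothesis $\alpha^{\star}\ge 0$ (equivalently, that the $\gamma$ associated with this $\kappa$ in Theorem~\ref{thm:pagerank} is nonnegative); under that hypothesis your dual computation already completes the proof, and without it the bound in the case $\kappa'\le\kappa$ is simply not true, so no patch of the kind you sketch can succeed.
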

\begin{proof}
It follows from Theorem \ref{thm:pagerank} that $\lambda(G,s,\kappa)$ is the 
same as the optimal value of $\sdp_{p}(G,s,\kappa)$ which, by strong 
duality, is the same as the optimal value of $\sdp_{d}(G,s,\kappa)$. 
Let $\alpha^{\star},\beta^{\star}$ be the optimal dual values to 
$\sdp_{d}(G, s,\kappa).$ 
Then, from the dual feasibility constraint 
$ L_{G}- \alpha^\star  L_{K_{n}}  - \beta^\star ( D_G {s})( D_G {s})^T \succeq 0 ,$
it follows that 
$$ s_{T}^{T}L_{G}s_{T} - \alpha^{\star}s_{T}^{T}L_{K_{n}}s_{T}-\beta^{\star} (s^T D_G s_{T})^{2} \geq 0.$$
Notice that since $ s_{T}^T D_G 1 =0$, it follows that 
$ s_{T}^{T}L_{K_{n}}s_{T}=s_{T}^{T}D_Gs_{T}=1$.
Further, since $s_{T}^{T}L_{G}s_{T}=\phi(T),$ we obtain, if $\kappa \leq \kappa',$ that  
$$
\phi(T) \geq \alpha^{\star} + \beta^{\star} ( s^T D_G s_{T} )^{2} \geq  \alpha^{\star} + \beta^{\star}\kappa = \lambda (G,s,\kappa).
$$
If on the other hand, $\kappa' \leq \kappa,$ then
$$\phi(T) \geq \alpha^{\star} + \beta^{\star} (s^T D_G s_{T}) ^{2} \geq \alpha^{\star} + \beta^{\star}\kappa \geq   \nfrac{ \kappa'}{\kappa} \cdot (\alpha^{\star} + \beta^{\star}\kappa) =\nfrac{ \kappa'}{\kappa} \cdot  \lambda (G,s,\kappa).
$$
Note that strong duality was used here. 
\end{proof}

\noindent
Thus, although the relaxation guarantees of Lemma~\ref{lem:relaxation} only 
hold when the seed set is a single vertex, we can use 
Theorem~\ref{thm:improve} to consider the following problem: given a graph 
$G$ and a cut $(T,\bar{T})$ in the graph, find a cut of minimum conductance 
in $G$ which is well-correlated with $T$ or certify that there is none. 
Although one can imagine many applications of this primitive, the main 
application that motivated this work was to explore  clusters nearby or 
around a given \emph{seed set} of nodes in data graphs.  
This will be illustrated in our empirical evaluation 
in~Section~\ref{sxn:empirical}.

\subsection{Our Geometric Notion of Correlation Between Cuts}
\label{sxn:partition-geometric}

Here we pause to make explicit the geometric notion of correlation between 
cuts (or partitions, or sets of nodes) that is used by \textsf{LocalSpectral}, 
and that has already been used in various guises in previous sections.
Given a cut $(T,\bar{T})$ in a graph $G=(V,E)$, a natural vector in 
$\mathbb{R}^{V}$ to associate with it is its characteristic vector, in which
case the correlation between a cut $(T,\bar{T})$ and another cut 
$(U,\bar{U})$ can be captured by the inner product of the characteristic 
vectors of the two cuts. 
A somewhat more refined vector to associate with a cut is the vector 
obtained after removing from the characteristic vector its projection along 
the all-ones vector.  
In that case, again, a notion of correlation is related to the inner product 
of two such vectors for two cuts. 
More precisely, given a set of nodes $T \subseteq V$, or equivalently a cut 
$(T,\bar{T})$, one can define the unit vector $s_{T}$~as
\[ 
s_{T}(i) = \left\{ \begin{array}{ll}
                      \sqrt{\nfrac{\vol(T)\vol(\bar{T})}{2m}} \cdot \nfrac{1}{\vol(T)}  & \mbox{if $i \in T $} \\
                      - \sqrt{\nfrac{\vol(T)\vol(\bar{T})}{2m}} \cdot \nfrac{1}{\vol(\bar{
T})}    & \mbox{if $i \in \bar{T}$}   .
                   \end{array}
           \right. \]
That is, 
$
s_T \defeq  \sqrt{\frac{\vol(T)\vol(\bar{T})}{2m}} \; \left(\frac{1_T}{\vol(T)} - \frac{1_{\bar{T}}}{\vol(\bar{T})}\right)   ,
$
which is exactly the vector defined in Section~\ref{sxn:partition-background}.
It is easy to check that this is well defined: one can replace $s_{T}$ by 
$s_{\bar{T}}$ and the correlation remains the same with any other set. 
Moreover, several observations are immediate.
First, defined this way, it immediately follows that 
$ s_T^T D_G 1 = 0$ and that $ s_T^T D_G s_T = 1$. 
Thus, $s_T \in \mathcal{S}_{D}$ for $T \subseteq V$, where we denote by 
$\mathcal{S}_{D}$ the set of vectors $\{x \in \mathbb{R}^V: x^T D_G 1 = 0\}$;
and $s_T$ can be seen as an appropriately normalized version of the vector 
consisting of the uniform distribution over $T$ minus the uniform 
distribution over $\bar{T}$.%
\footnote{Notice also that $s_T = - s_{\bar{T}}$.  Thus, since 
we only consider quadratic functions of $s_T,$ we 
can consider both $s_T$ and $s_{\bar{T}}$ to be representative vectors for 
the cut $(T, \bar{T}).$ }
Second, one can introduce the following measure of correlation between two 
sets of nodes, or equivalently between two cuts, say a cut $(T, \bar{T})$ 
and a cut $(U, \bar{U})$:
$$
K(T,U) \defeq ( s_T D_G s_U )^2 .
$$
The proofs of the following simple facts regarding $K(T,U)$ are omitted:
$K(T,U) \in [0,1]$;
$K(T,U) = 1$ if and only if $T=U$ or $\bar{T}=U$;
$K(T,U) = K(\bar{T}, U)$; and
$K(T,U) = K(T, \bar{U})$.
Third, although we have described this notion of geometric correlation in 
terms of vectors of the form $s_T \in \mathcal{S}_{D}$ that represent 
partitions $(T,\bar{T})$, this correlation is clearly well-defined for other 
vectors $s \in \mathcal{S}_{D}$ for which there is not such a simple 
interpretation in terms of cuts.
Indeed, in Section~\ref{sxn:optimize} we considered the case that $s$ was 
an arbitrary vector in $\mathcal{S}_{D}$, while in the first part of 
Section~\ref{sxn:partition-local} we considered the case that $s$ was the 
seed set of a single node.
In our empirical evaluation in Section~\ref{sxn:empirical}, we will 
consider both of these cases as well as the case that $s$ encodes the 
correlation with cuts consisting of multiple nodes.

\section{Empirical Evaluation}
\label{sxn:empirical} 

In this section, we provide an empirical evaluation of \textsf{LocalSpectral} 
by illustrating its use at finding and evaluating locally-biased 
low-conductance cuts, \emph{i.e.}, sparse cuts or good clusters, around an input seed set of nodes in 
a data graph.
We start with a brief discussion of a very recent and pictorially-compelling 
application of our method to a computer vision problem; and then we discuss 
in detail how our method can be applied to identify clusters and communities 
in a more heterogeneous and more difficult-to-visualize social network 
application.

\subsection{Semi-Supervised Image Segmentation}

Subsequent to the initial dissemination of the technical report version of 
this paper, Maji, Vishnoi, and Malik~\cite{MVM11} applied our methodology to 
the problem of finding locally-biased cuts in a computer vision application.
Recall that image segmentation is the problem of partitioning a digital image 
into segments corresponding to significant objects and areas in the image. 
A standard approach consists in converting the image data into a similarity 
graph over the the pixels and applying a graph partitioning algorithm to 
identify relevant segments. 
In particular, spectral methods have been popular in this area since the 
work of Shi and Malik~\cite{ShiMalik00_NCut}, which used the second 
eigenvector of the graph to approximate the so-called normalized cut 
(which, recall, is an objective measure for image segmentation that is 
practically equivalent to conductance). 
However, a difficulty in applying the normalized cut method is that in many 
cases global eigenvectors may fail to capture important local segments of 
the image.
The reason for this is that they aggressively optimize a global objective 
function and thus they tend to combine multiple segments together; 
this is illustrated pictorially in the first row of Figure~\ref{fig:imseg}.

This difficulty can be overcome in a semi-supervised scenario by using 
our \textsf{LocalSpectral} method. 
Specifically, one often has a small number of ``ground truth'' labels that 
correspond to known segments, and one is interested in extracting and 
refining the segments in which those labels reside.
In this case, if one considers an input seed corresponding to a small number 
of pixels within a target object, then \textsf{LocalSpectral} can recover 
the corresponding segment with high precision.
This is illustrated in the second row of Figure~\ref{fig:imseg}. 
This computer vision application of our methodology was motivated by a 
preliminary version of this paper, and it was described in detail and evaluated against competing 
algorithms by Maji, Vishnoi, and Malik~\cite{MVM11}. 
In particular, they show that \textsf{LocalSpectral} achieves a performance superior to 
that of other semi-supervised segmentation algorithms~\cite{YS01,EOK07}; and 
they also show how \textsf{LocalSpectral} can be incorporated in an 
unsupervised segmentation pipeline by using as input seed distributions 
obtained by an object-detector algorithm~\cite{bmbm10}.

\begin{figure}[h]
   \begin{center}
   \includegraphics[ scale=1.20]{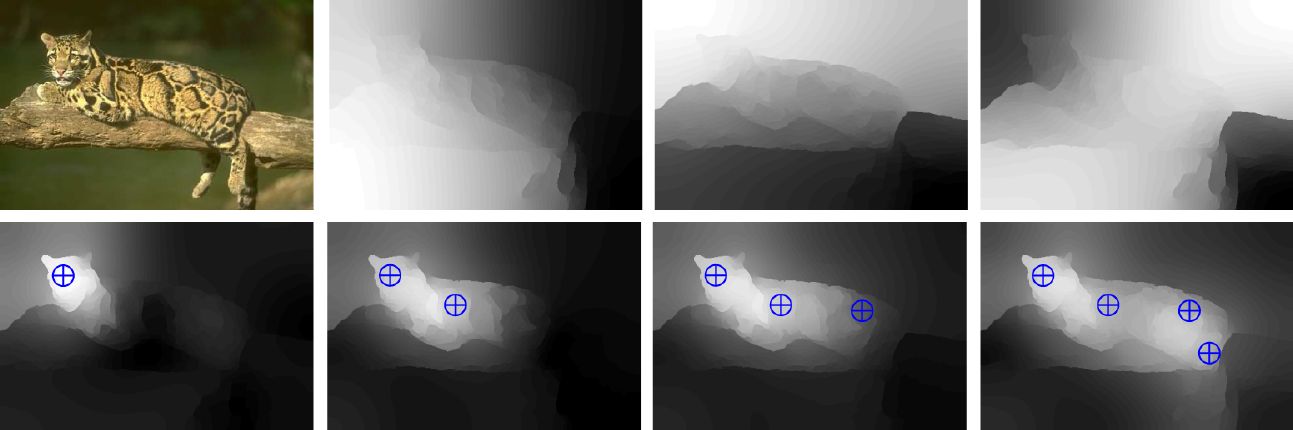}
   \end{center}
\caption{The first row shows the input image and the three smallest eigenvectors of the Laplacian of the corresponding similarity graph computed using the intervening contour cue~\cite{MAFM08}.  Note that no sweep cut of these eigenvectors reveals the leopard.  The second row shows the results of \textsf{LocalSpectral} with a setting of $\gamma = -10 \lambda_2(G)$ with the seed pixels highlighted by crosshairs.  Note how one can to recover the leopard by using a seed vector representing a set of only 4 pixels.  In addition, note how the first seed pixel allows us to capture the head of the animal, while the other seeds help reveal other parts of its body. }
\label{fig:imseg}	
\end{figure}

\subsection{Detecting Communities in Social Networks}

Finding local clusters and meaningful locally-biased communities is also of 
interest in the analysis of large social and information networks.  
A standard approach to finding clusters and communities in many network 
analysis applications is to formalize the idea of a good community with an 
``edge counting'' metric such as conductance or modularity and then to use a 
spectral relaxation to optimize it
approximately~\cite{newman2006finding,newman2006_ModularityPNAS}.
For many very large social and information networks, however, there simply do
not exist good large global clusters, but there do exist small
meaningful local clusters that may be thought of as being nearby 
prespecified seed sets of
nodes~\cite{LLDM08_communities_CONF,LLDM09_communities_IM,LLM10_communities_CONF}.
In these cases, a local version of the global spectral partitioning
problem is of interest, as was shown by Leskovec, Lang, and
Mahoney~\cite{LLM10_communities_CONF}.
Typical networks are very large and, due to their expander-like properties, 
are not 
easily-visualizable~\cite{LLDM08_communities_CONF,LLDM09_communities_IM}.
Thus, in order to illustrate the empirical behavior of our
\textsf{LocalSpectral} methodology in a ``real'' network application related 
to the one that motivated this 
work~\cite{LLDM08_communities_CONF,LLDM09_communities_IM,LLM10_communities_CONF}, 
we examined a small ``coauthorship network'' of scientists.
This network was previously used by Newman~\cite{newman2006finding} to study 
community structure in small social and information networks.

\begin{figure}[h] 
\begin{center}
\includegraphics[viewport= 0 0 1450 818, clip=true, scale=.27]{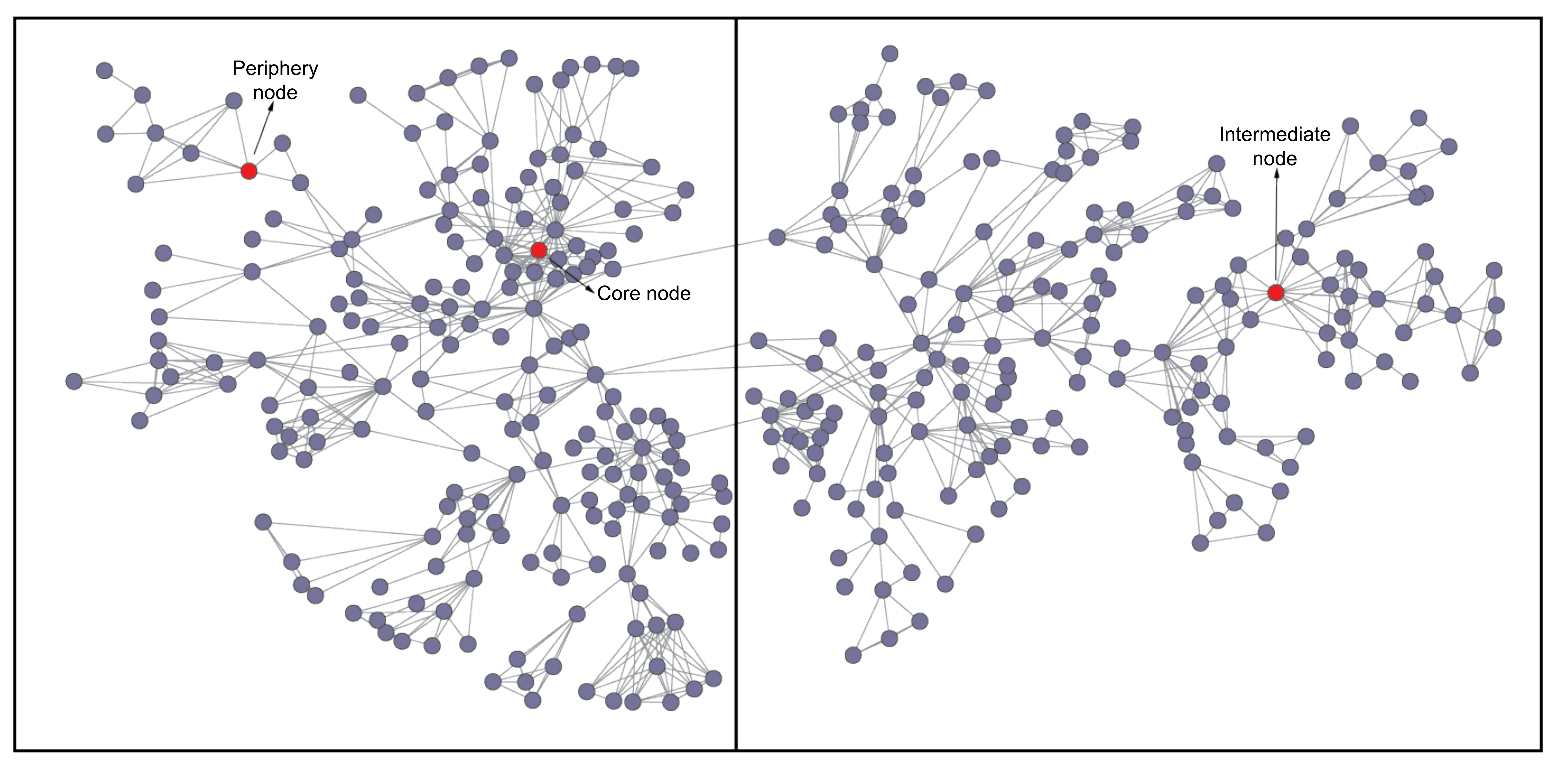} 
\end{center}
\caption{[Best viewed in color.]  The coauthorship network of Newman~\cite{newman2006finding}. This layout was obtained in the Pajek~\cite{Pajek03} visualization software, using the Kamada-Kawai method~\cite{Kamada89} on each component of a partition provided by \textsf{LocalCut} and tiling the layouts at the end.  Boxes show the two main global components of the network, which are displayed separately in subsequent~figures.}
\label{fig:network}
\end{figure}

The corresponding graph $G$ is illustrated in Figure~\ref{fig:network} and
consists of $379$ nodes and $914$ edges, where each node represents an 
author and each unweighted edge represents a coauthorship relationship.
%
The spectral gap $\lambda_2(G) = 0.0029$; and a sweep cut of the eigenvector 
corresponding to this second eigenvalue yields the globally-optimal spectral 
cut separating the graph into two well-balanced partitions, corresponding to 
the left half and the right half of the network, as shown in 
Figure~\ref{fig:network}.
Our main empirical observations, described in detail in the remainder of 
this section, are the following.
\begin{itemize}
\item
First, we show how varying the teleportation parameter allows us to detect 
low-conductance cuts of different volumes that are locally-biased around a 
prespecified seed vertex; and how this information, aggregated over 
multiple choices of teleportation, can improve our understanding of the 
network structure in the neighborhood of the seed.
\item
Second, we demonstrate the more general usefulness of our definition of a 
\emph{generalized} Personalized PageRank vector (where the $\gamma$ 
parameter in Eqn.~(\ref{eqn:xstar}) can be $\gamma\in(-\infty,\lambda_2(G)$) 
by displaying specific instances in which that vector is more effective than 
the usual Personalized PageRank (where only positive teleportation 
probabilities are allowed and thus where $\gamma$ must be negative).
We do this by detecting a wider range of low-conductance cuts at a given 
volume and by interpolating smoothly between very locally-biased solutions 
to \textsf{LocalSpectral} and the global solution provided by the 
\textsf{Spectral} program.
\item
Third, we demonstrate how our method can find low-conductance cuts that are 
well-correlated to more general input seed vectors by demonstrating an 
application to the detection of sparse peripheral regions, \emph{e.g.}, 
regions of the network that are well-correlated with low-degree nodes.
This suggests that our method may find applications in leveraging 
feature data, which are often associated with the vertices of a 
data graph, to find interesting and meaningful cuts.
\end{itemize}
We emphasize that the goal of this empirical evaluation is to illustrate 
how our proposed methodology can be applied in real applications; and thus 
we work with a relatively easy-to-visualize example of a small social graph.
This will allow us to illustrate how the ``knobs'' of our proposed method 
can be used in practice.
In particular, the goal is not to illustrate that our method or heuristic 
variants of it or other spectral-based methods scale to much larger 
graphs---this latter fact is by now 
well-established~\cite{andersen06seed,LLDM08_communities_CONF,LLDM09_communities_IM,LLM10_communities_CONF}.

\subsubsection{Algorithm Description and Implementation}

We refer to our cut-finding algorithm, which will be used to guide our 
empirical study of finding and evaluating cuts around an input seed set of 
nodes and which is a straightforward extension of the algorithm referred 
to in  Theorem~\ref{thm:cut}, as \textsf{LocalCut}.
In addition to the graph, the input parameters for \textsf{LocalCut} are a 
seed vector $s$ (\emph{e.g.}, corresponding to a single vertex $v$), a 
teleportation parameter $\gamma$, and (optionally) a size factor $c$.
Then, \textsf{LocalCut} performs the following steps.
\begin{itemize}
\item
First, compute the vector $x^{\star}$ of Eqn.~(\ref{eqn:xstar}) with seed $s$ 
and teleportation $\gamma$. 
\item
Second, either
perform a sweep of the vector $x^{\star}$, \emph{e.g.}, consider each of the
$n$ cuts defined by the vector and return the the minimum conductance cut 
found along the sweep;
or 
consider only sweep cuts along the vector $x^{\star}$ of volume at most 
$c \cdot k_{\gamma}$, 
where $k_{\gamma}=1/\kappa_{\gamma}$, 
that contain the input vertex $v$, and return the 
minimum conductance cut among such cuts.
\end{itemize}

\noindent
By Theorem~\ref{thm:pagerank}, the vector computed in the first step of 
\textsf{LocalCut}, $x^\star$, is an optimal solution to 
\textsf{LocalSpectral}$(G,s,\kappa_{\gamma})$ for some choice of 
$\kappa_{\gamma}$.
(Indeed, by fixing the above parameters, the $\kappa$ parameter is fixed 
implicitly.)
Then, by Theorem~\ref{thm:cut}, when the vector $x^\star$ is rounded 
(to, \emph{e.g.}, $\{-1,+1\}$) by performing the sweep cut, provably-good 
approximations are guaranteed. 
In addition, when the seed vector corresponds to a single vertex $v$, it
follows from Lemma~\ref{lem:relaxation} that $x^\star$ yields a lower 
bound to the conductance of cuts that contain $v$ and have less than a 
certain volume $k_\gamma$.

Although the full sweep-cut rounding does not give a specific guarantee on 
the volume of the output cut, empirically we have found that it is often 
possible to find small low-conductance cuts in the range dictated by $k_\gamma$.
Thus, in our empirical evaluation, we also consider volume-constrained sweep 
cuts (which departs slightly from the theory but can be useful in practice).
That is, we also introduce a new input 
parameter, a \textit{size factor} $c > 0$, that regulates the maximum volume 
of the sweep cuts considered when $s$ represents a single vertex.
In this case, \textsf{LocalCut} does not consider all $n$ cuts defined by 
the vector $x^{\star}$, but instead it considers only sweep cuts of volume 
at most $c\cdot k_\gamma$ that contain the vertex $v$.
(Note that it is a simple consequence of our optimization characterization 
that the optimal vector has sweep cuts of volume at most $k_\gamma$ 
containing $v$.)
This new input parameter turns out to be extremely useful in exploring cuts 
at different sizes, as it neglects sweep cuts of low conductance at large 
volume and allows us to pick out more local cuts around the seed vertex.

In our first two sets of experiments, summarized in 
Sections~\ref{sec:teleport} and~\ref{sec:size}, we used single-vertex seed 
vectors, and we analyzed the effects of varying the parameters $\gamma$ and 
$c$, as a function of the location of the seed vertex in the input graph.
In the last set of experiments, presented in Section~\ref{sec:multi}, we 
considered more general seed vectors, including both seed vectors that 
correspond to multiple nodes, \emph{i.e.}, to cuts or partitions in the 
graph, as well as seed vectors that do not have an obvious interpretation in 
terms of input cuts.
We implemented our code in a combination of MATLAB and C++, solving linear 
systems using the Stabilized Biconjugate Gradient Method~\cite{bicg92} 
provided in MATLAB 2006b. 
On this particular coauthorship network, and on a Dell PowerEdge 1950 
machine with 2.33 GHz and 16GB of RAM, the algorithm ran in less than a few 
seconds.

\subsubsection{Varying the Teleportation Parameter}
\label{sec:teleport}

Here, we evaluate the effect of varying the teleportation parameter 
$\gamma \in (-\infty,\lambda_2(G))$, where recall $\lambda_2(G) = 0.0029$.
Since it is known that large social and information networks are quite 
heterogeneous and exhibit a very strong ``nested core-periphery'' 
structure~\cite{LLDM08_communities_CONF,LLDM09_communities_IM,LLM10_communities_CONF}, 
we perform this evaluation by considering the behavior of \textsf{LocalCut} 
when applied to three types of seed nodes, examples of which are the 
highlighted vertices in Figure~\ref{fig:network}.
These three nodes were chosen to represent three different types of nodes
seen in larger networks: 
a \textit{periphery-like node}, which belongs to a lower-degree and less 
expander-like part of the graph, and which tends to be surrounded by 
lower-conductance cuts of small volume; 
a \textit{core-like node}, which belongs to a denser and higher-conductance 
or more expander-like part of the graph; and 
an \textit{intermediate node}, which belongs to a regime between the 
core-like and the periphery-like regions.

For each of the three representative seed nodes, we executed $1000$ runs of 
\textsf{LocalCut} with $c = 2$ and $\gamma$ varying by $0.001$ increments. 
Figure~\ref{fig:teleportation} displays, for each of these three seeds, a 
plot of the conductance as a function of volume of the cuts found by each 
run of \textsf{LocalCut}. 
We refer to this type of plot as a \textit{local profile plot} since it is 
a specialization of the \textit{network community profile plot}~\cite{LLDM08_communities_CONF,LLDM09_communities_IM,LLM10_communities_CONF} 
to cuts around the specified seed vertex.
In addition, Figure~\ref{fig:teleportation} also plots several other 
quantities of interest:
first, the volume and conductance of the theoretical lower bound yielded by 
each run; 
second, the volume and conductance of the cuts defined by the shortest-path 
balls (in squares and numbered according to the length of the path) around 
each seed (which should and do provide a sanity-check upper bound); 
third, next to each of the plots, we present a color-coded image of 
representative cuts detected by \textsf{LocalCut}; and 
fourth, for each of the cuts illustrated on the left, a color-coded triangle 
and the numerical value of $-\gamma$ is shown on the~right.

\newcommand{\imscale}{.13}
\newcommand{\plscale}{.19}
\renewcommand{\imscale}{.2}
\renewcommand{\plscale}{.27}

\begin{figure}[] 
\subfigure[Selected cuts and profile plot for the \emph{core-like node}.]{
\includegraphics[viewport= 50 -90 818 818, clip=true, scale=\imscale]{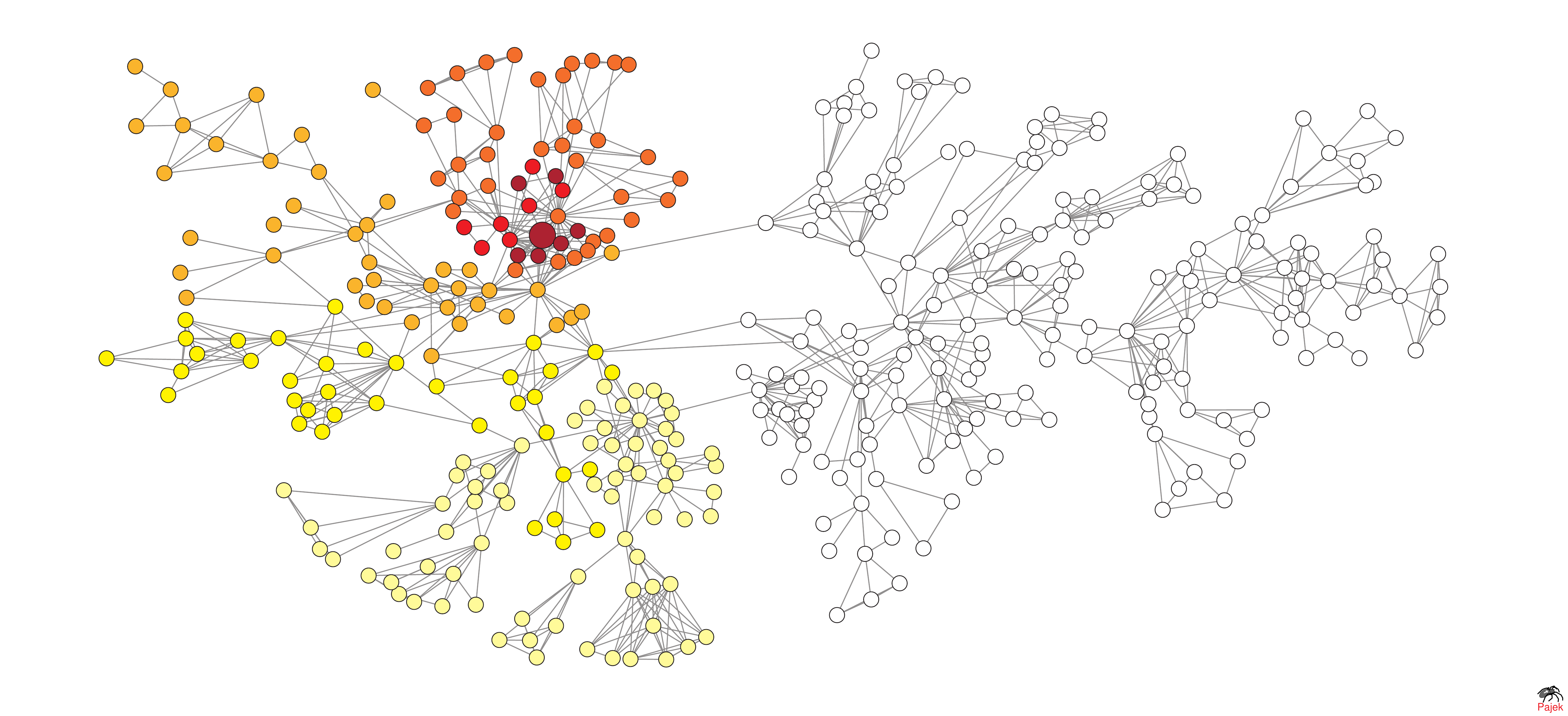} 
\label{subfig:tc}
\hspace{.7in}
\includegraphics[ scale=\plscale]{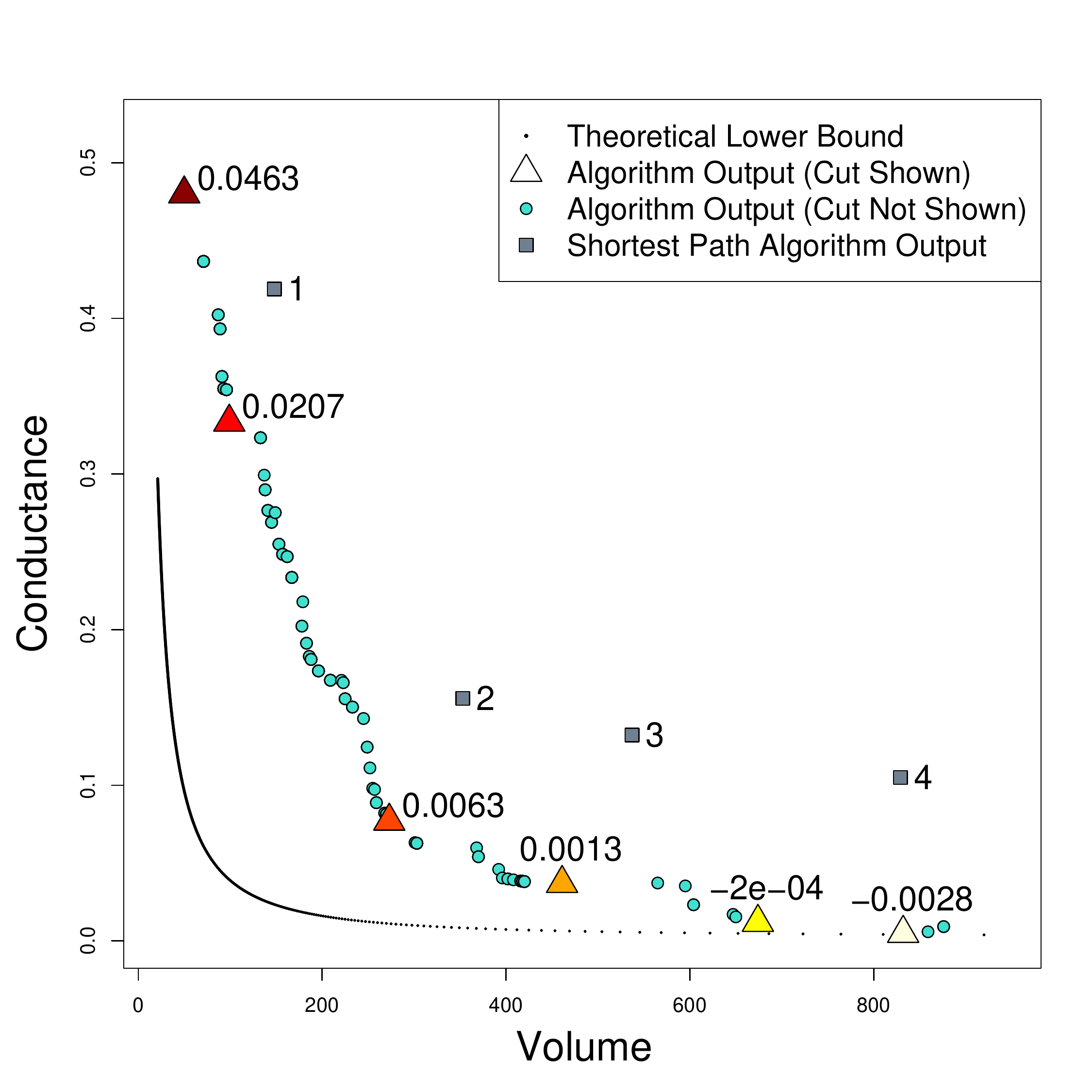}
}
\subfigure[Selected cuts and profiles plot for the \emph{intermediate node}.]
{
\includegraphics[viewport= 650 -90 1420 818, clip=true, scale=\imscale]{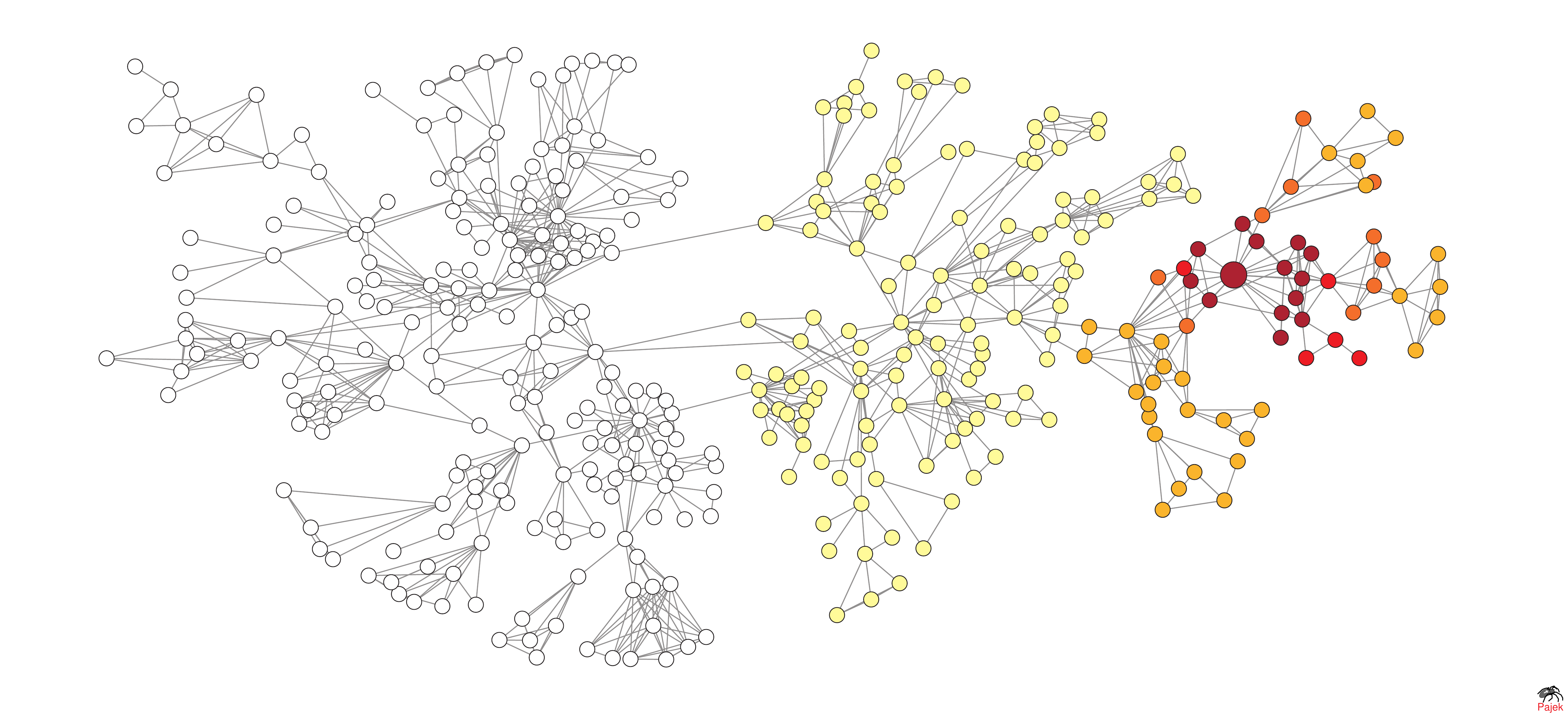} 
\label{subfig:tb}
\hspace{.7in}
\includegraphics[ scale=\plscale]{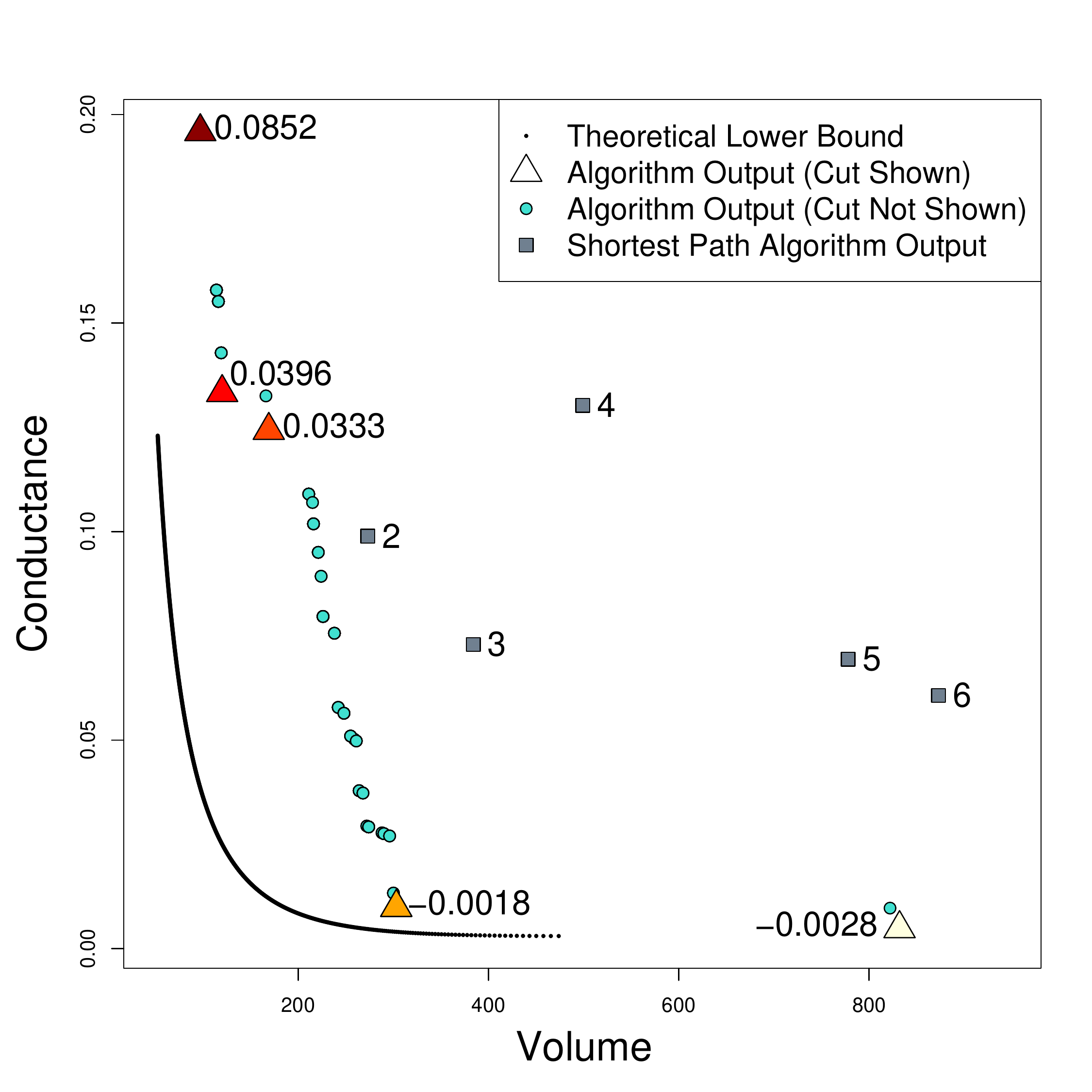}
}
\subfigure[Selected cuts and profile plot for the \emph{periphery-like node}.]{
\includegraphics[viewport= 50 -90 818 818, clip=true, scale=\imscale]{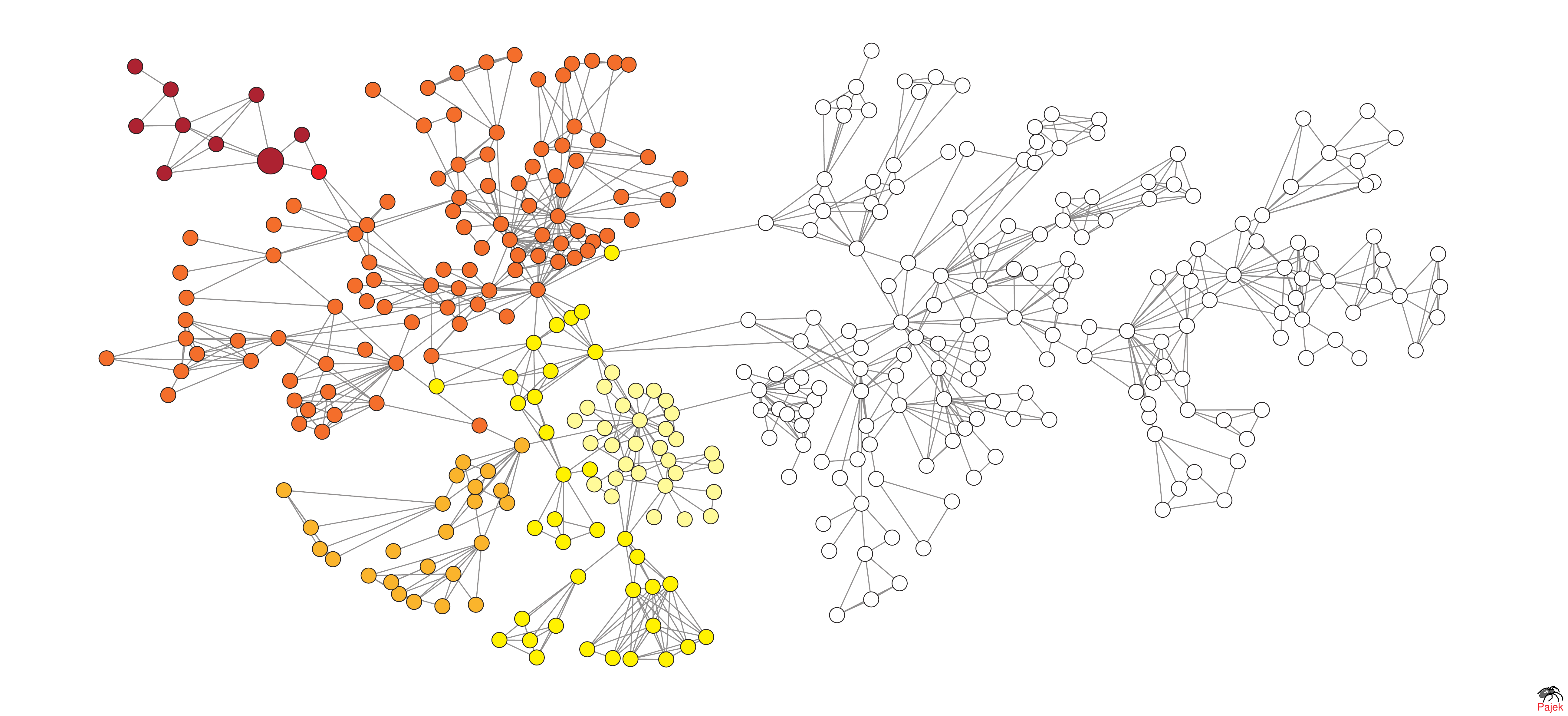}
\label{subfig:ta}
\hspace{.7in}
\includegraphics[ scale=\plscale]{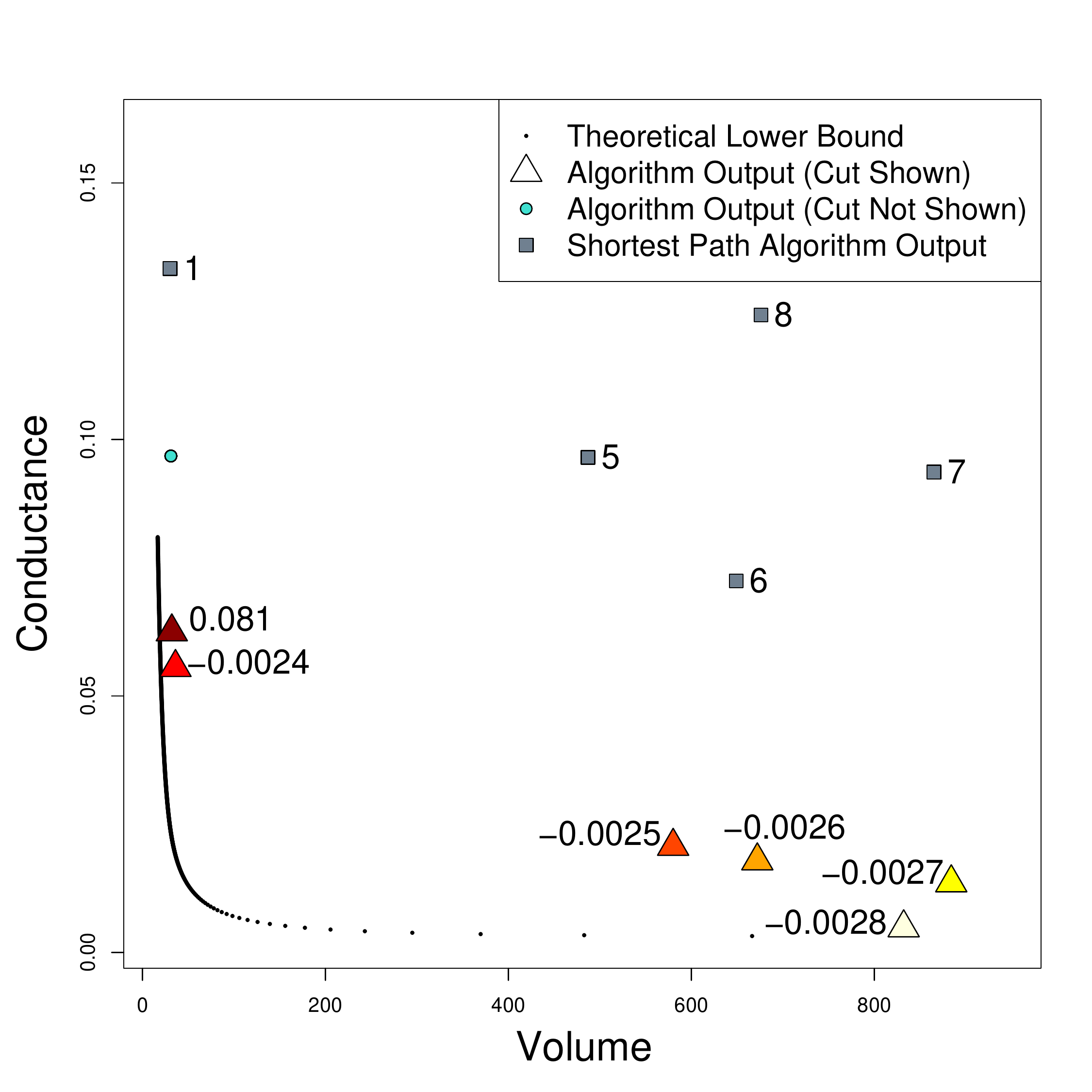}
}
\caption{[Best viewed in color.]  
Selected cuts and local profile plots for varying $\gamma$. 
The cuts on the left are displayed by assigning to each vertex a color 
corresponding to the smallest selected cut in which the vertex was included. 
Smaller cuts are darker, larger cuts are lighter; and the seed vertex is 
shown slightly larger. 
Each profile plot on the right shows results from $1000$ runs of 
\textsf{LocalCut}, with $c=2$ and $\gamma$ decreasing in $0.001$ increments 
starting at $0.0028$.
For each color-coded triangle, corresponding to a cut on the left, $-\gamma$ 
is also listed.
}
\label{fig:teleportation}
\end{figure}

Several points about the behavior of the \textsf{LocalCut} algorithm as a 
function of the location of the input seed node and that are illustrated in 
Figure~\ref{fig:teleportation} are worth emphasizing.
\begin{itemize}
\item
First, for the core-like node, whose profile plot is shown in 
Figure~\ref{subfig:tc}, the volume of the output cuts grows relatively 
smoothly as $\gamma$ is increased (\emph{i.e.}, as $-\gamma$ is decreased). 
For small $\gamma$, \emph{e.g.}, $\gamma=-0.0463$ or $\gamma=-0.0207$, the 
output cuts are forced to be small and hence display high conductance, as 
the region around the node is somewhat expander-like. 
By decreasing the teleportation, the conductance progressively decreases, 
as the rounding starts  
to hit nodes in peripheral regions, whose inclusion only improves 
conductance (since it increases the cut volume without adding many 
additional cut edges).
In this case, this phenomena ends at $\gamma = -0.0013,$ when a cut 
of conductance value close to that of the global optimum is~found. 
(After that, larger and slightly better conductance cuts can still be found, 
but, as discussed below, they require $\gamma > 0$.)
\item
Second, a similar interpretation applies to the profile plot of the 
intermediate node, as shown in Figure~\ref{subfig:tb}. 
Here, however, the global component of the network containing the seed has 
smaller volume, around $300$, and a very low conductance (again, requiring 
$\gamma > 0$).
Thus, the profile plot \emph{jumps} from this cut to the much larger 
eigenvector sweep cut, as will be discussed below.
\item
Third, a more extreme case is that of the periphery-like node, whose profile 
plot is displayed in Figure~\ref{subfig:ta}. 
In this case, an initial increase in $\gamma$ does \emph{not} yield larger cuts.
This vertex is contained in a small-volume cut of low conductance, and thus
diffusion-based methods get ``stuck'' on the small side of the cut.
The only cuts of lower conductance in the network are those separating the 
global components, which can only be accessed when $\gamma> 0$. 
Hence, the teleportation must be greatly decreased before the algorithm 
starts outputting cuts at larger volumes.
(As an aside, this behavior is also often seen with so-called ``whiskers'' 
in much larger social and information 
networks~\cite{LLDM08_communities_CONF,LLDM09_communities_IM,LLM10_communities_CONF}.) 
\end{itemize}

\noindent
In addition, several general points that are illustrated in 
Figure~\ref{fig:teleportation} are worth emphasizing about the 
behavior of our algorithm.
\begin{itemize}
\item
First, \textsf{LocalCut} found low-conductance cuts of different volumes 
around each seed vertex, outperforming the shortest-path algorithm (as it 
should) by a factor of roughly $4$ in most cases. 
However, the results of \textsf{LocalCut} still lie away from the lower 
bound, which is also a factor of roughly $4$ smaller at most volumes. 
\item
Second, consider the range of the teleportation parameter necessary for the 
\textsf{LocalCut} algorithm to discover the well-balanced globally-optimal 
spectral partition.
In all three cases, it was necessary to make $\gamma$ positive 
(\emph{i.e.}, $-\gamma$ negative) to detect the well-balanced global 
spectral cut.
Importantly, however, the quantitative details depend strongly on whether the
seed is core-like, intermediate, or periphery-like.
That is, by \emph{formally} allowing ``negative teleportation'' 
probabilities, which correspond to $\gamma > 0$, the use of 
\emph{generalized} Personalized PageRank vectors as an exploratory tool is 
much stronger than the usual Personalized 
PageRank~\cite{andersen06local,andersen06seed}, in that it permits one to 
find a larger class of clusters, up to and including the global partition 
found by the solution to the global \textsf{Spectral} program.
Relatedly, it provides a smooth interpolation between Personalized PageRank
and the second eigenvector of the graph. 
Indeed, for $\gamma = 0.0028 \approx \lambda_2(G)$, \textsf{LocalCut} 
outputs the same cut as the eigenvector sweep cut for all three seeds.
\item
Third, recall that, given a teleportation parameter $\gamma$, the rounding 
step selects the cut of smallest conductance along the sweep cut of the 
solution vector.
(Alternatively, if volume-constrained sweeps are considered, then it selects 
the cut of smallest conductance among sweep cuts of volume at most 
$c \cdot k_\gamma$, where $k_\gamma$ is the lower bound obtained from the 
optimization program.)
In either case, increasing $\gamma$ can lead \textsf{LocalCut} 
to pick out larger cuts, but it does not \emph{guarantee} this will happen. 
In particular, due to the local topology of the graph, in many instances 
there may \emph{not} be a way of slightly increasing the volume of a cut 
while slightly decreasing its conductance. 
In those cases, \textsf{LocalCut} may output the same small sweep cut for a 
range of teleportation parameters until a much larger, much lower-conductance 
cut is then found.
The presence such horizontal and vertical \textit{jumps} in the local 
profile plot conveys useful information about the structure of the network 
in the neighborhood of the seed at different size scales, illustrating that 
the practice follows the theory quite well.
\end{itemize}

\subsubsection{Varying the Output-Size Parameter}
\label{sec:size}

Here, we evaluate the effect of varying the size factor $c$, for a fixed 
choice of teleportation parameter $\gamma$.
(In the previous section, $c$ was fixed at $c=2$ and $\gamma$ was varied.)
We have observed that varying $c$, like varying $\gamma$, tends to have the 
effect of producing low-conductance cuts of different volumes around the 
seed vertex. 
Moreover, it is possible to obtain low-conductance large-volume cuts, even 
at lower values of the teleportation parameter, by increasing $c$ to a 
sufficiently large value.
This is illustrated in Figure~\ref{fig:sizecore}, which shows the result of 
varying $c$ with the core-like node as the seed and $-\gamma = 0.02$. 
Figure~\ref{subfig:tc} illustrated that when $c=2,$ this setting only yielded 
a cut of volume close to $100$ (see the red triangle with $-\gamma=0.0207$);
but the yellow crosses in Figure~\ref{fig:sizecore} illustrate that by 
allowing larger values of $c$, better conductance cuts of larger volume can 
be~obtained. 

\begin{figure}[h]
\centering
\includegraphics[viewport= 50 -90 818 818, clip=true, scale=\imscale]{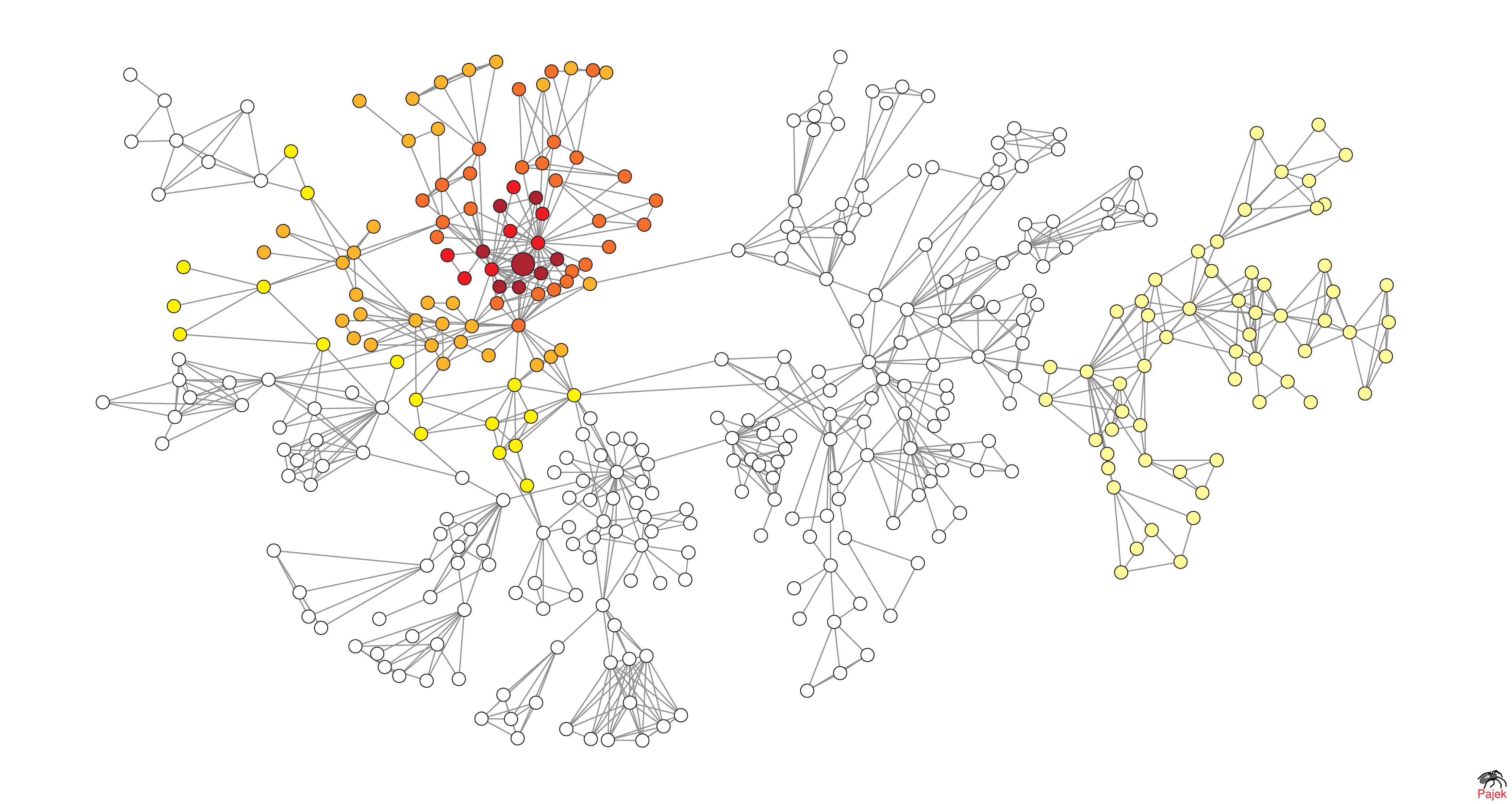}
\hspace{.7in}
\includegraphics[ scale=\plscale]{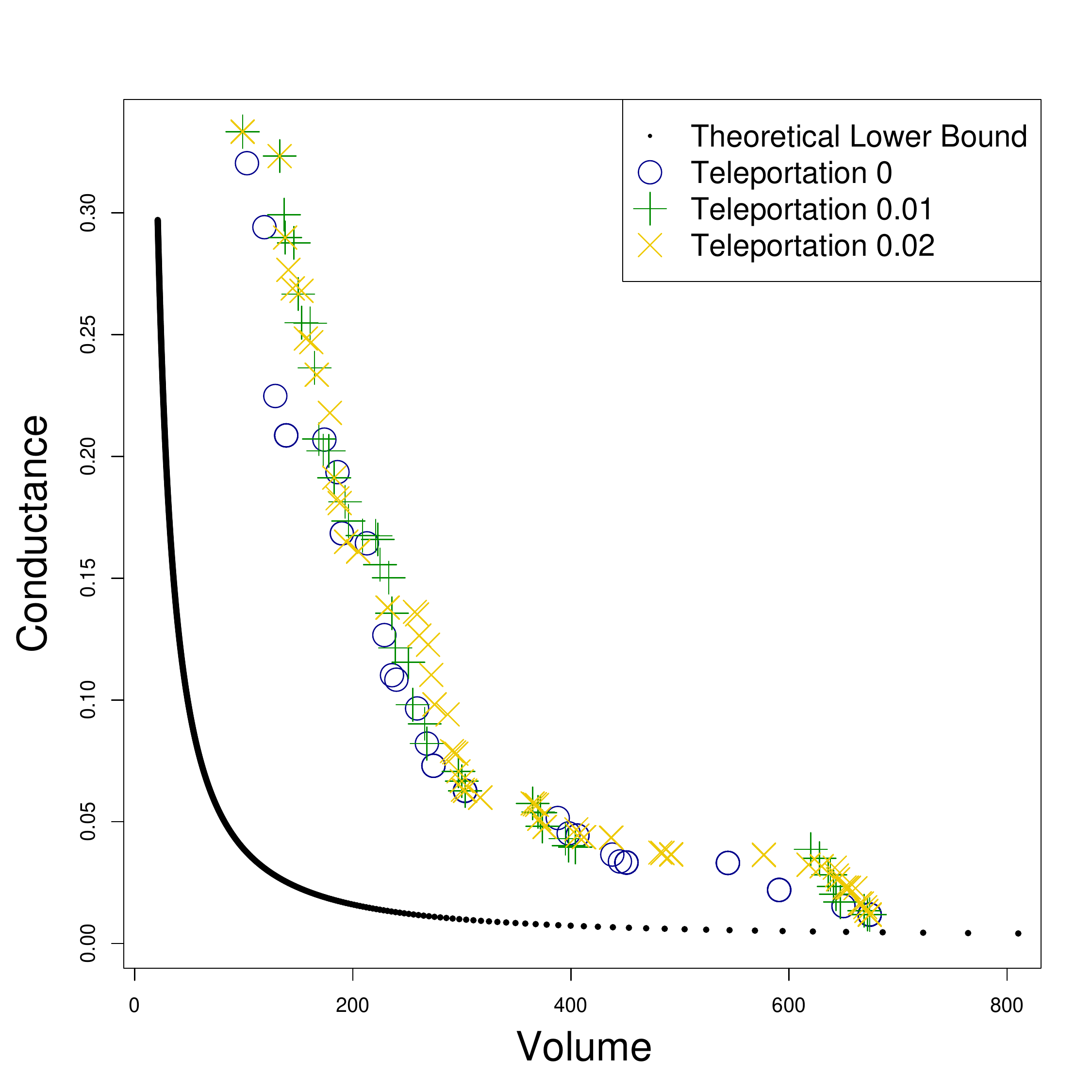}
\caption{[Best viewed in color.]  Selected cuts and local profile plots for varying $c$ with the core-like node as the seed. The cuts are displayed by assigning to each vertex a color corresponding to the smallest selected cut in which the vertex was included. Smaller cuts are darker, larger are lighter. The seed vertex is shown larger. The profile plot shows results from $1000$ runs of \textsf{LocalCut}, with varying $c$ and $-\gamma \in \{0, 0.01, 0.02\}$ . }
\label{fig:sizecore}
\end{figure}

While many of these cuts tend to have conductance slightly worse than the 
best found by varying the teleportation parameter, the observation that cuts 
of a wide range of volumes can be obtained with a single value of $\gamma$ 
leaves open the possibility that there exists a single choice of 
teleportation parameter $\gamma$ that produces good low-conductance cuts at 
all volumes simply by varying~$c$. 
(This would allow us to only solve a single optimization problem and still 
find cuts of different volumes.)
To address (and rule out) this possibility, we selected three choices of 
the teleportation parameter for each of the three seed nodes, and then we 
let $c$ vary. 
The resulting output cuts for the core-like node as the seed are plotted 
(in blue, green, and yellow) in Figure~\ref{fig:sizecore}. 
(The plots for the other seeds are similar and are not displayed.)
Clearly, no single teleportation setting dominates the others: in 
particular, at volume $200$ the lowest-conductance cut was produced with 
$-\gamma=0.02$; at volume $400$ it was produced with $-\gamma=0.01$; and at 
volume $600$ with it was produced with $\gamma=0$. 
The highest choice of $\gamma=0$ performed marginally better overall, 
recording lowest conductance cuts at both small and large volumes. 
That being said, the results of all three settings roughly track each~other, 
and cuts of a wide range of volumes were able to be obtained by varying the 
size parameter $c$.

These and other empirical results suggest that the best results are 
achieved when we vary both the teleportation parameter and the size factor. 
In addition, the use of multiple teleportation choices have the side-effect 
advantage of yielding multiple lower bounds at different volumes.

\subsubsection{Multiple Seeds and Correlation}
\label{sec:multi}

Here, we evaluate the behavior of \textsf{LocalCut} on more general seed 
vectors. 
We consider two examples---for the first example, there is an interpretation 
as a cut or partition consisting of multiple nodes; while the second example 
does not have any immediate interpretation in terms of cuts or partitions.

\begin{figure}[h] 
\subfigure[Seed set of four seed nodes.]{
\includegraphics[viewport= 50 0 818 818, clip=true, scale=\imscale]{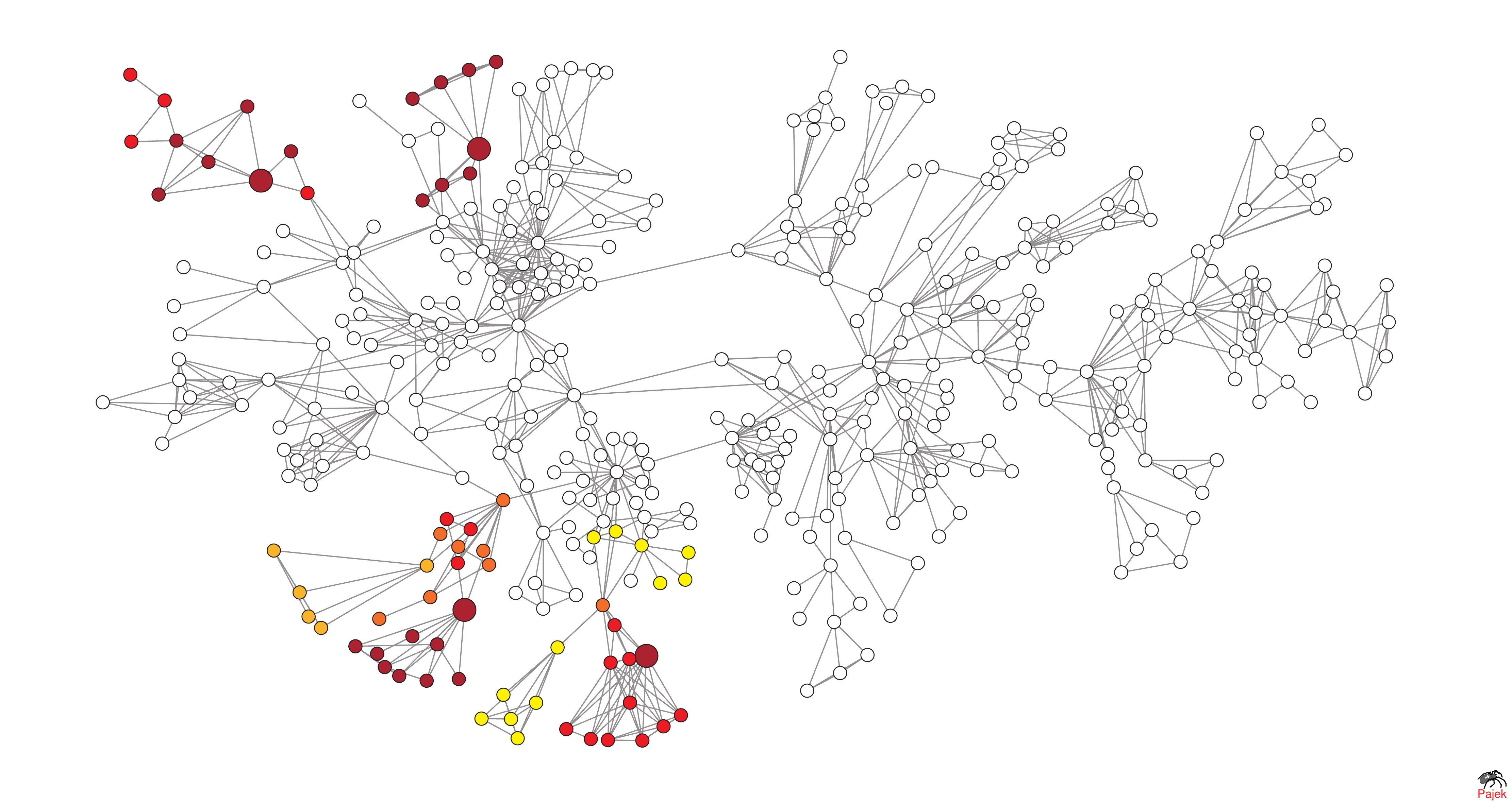}
\label{fig:mseeds0a}
}
\subfigure[A more general seed vector.]{
\includegraphics[viewport= 50 0 1250 818, clip=true, scale=\imscale]{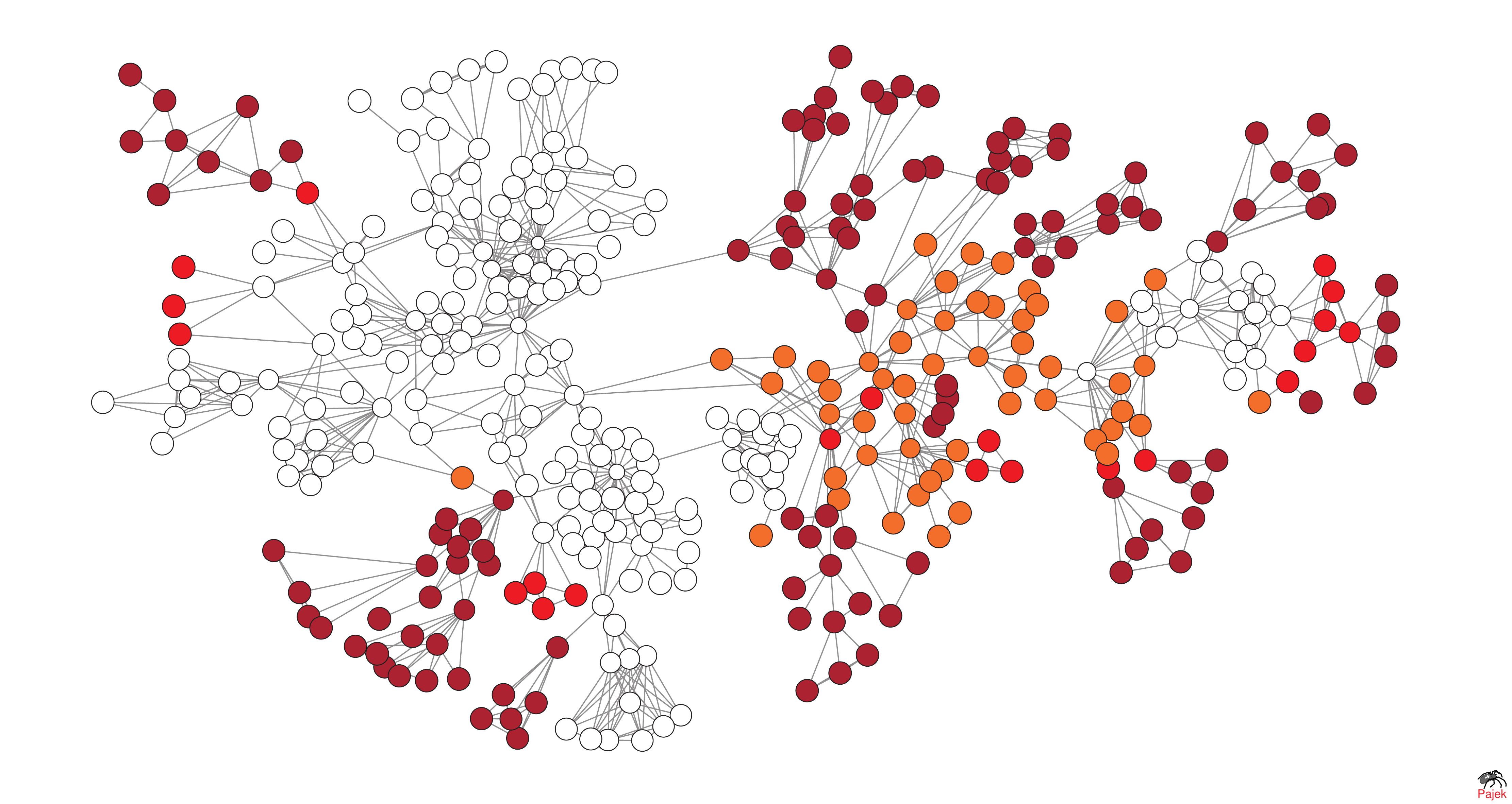}
\label{fig:mseeds0b}
}
\caption{[Best viewed in color.]  
Multiple seeds and correlation.
\ref{fig:mseeds0a} shows selected cuts for varying $\gamma$ with the seed 
vector corresponding to a subset of $4$ vertices lying in the periphery-like
region of the network. 
\ref{fig:mseeds0b} shows selected cuts for varying $\gamma$ with the seed 
vertex equal to a normalized version of the degree vector. 
In both cases, the cuts are displayed by assigning to each vertex a color 
corresponding to the smallest selected cut in which the vertex was included. 
Smaller cuts are darker, larger are lighter.  
}
\label{fig:mseeds0}
\end{figure}


In our first example, we consider a seed vector representing a subset of 
four nodes, located in different peripheral branches of the left half of 
the global partition of the the network: 
see the four slightly larger (and darker) vertices in Figure~\ref{fig:mseeds0a}.
This is of interest since, depending on the size-scale at which one is 
interested, such sets of nodes can be thought of as either ``nearby'' or ``far 
apart.''
For example, when viewing the entire graph of $379$ nodes, these 
four nodes are all close, in that they are all on the left side of the 
optimal global spectral partition; but when considering smaller clusters 
such as well-connected sets of $10$ or $15$ nodes, these four nodes are 
much farther apart. 
In Figure~\ref{fig:mseeds0a}, we display a selection of the cuts found by 
varying the teleportation, with $c=2$. 
The smaller cuts tend to contain the branches in which each seed node is 
found, while larger cuts start to incorporate nearby branches. 
Not shown in the color-coding is that the optimal global spectral 
partition is eventually recovered.
Identifying peripheral areas that are well-separated from the rest of the 
graph is a useful primitive in studying the structure of social 
networks~\cite{LLDM08_communities_CONF,LLDM09_communities_IM,LLM10_communities_CONF}; 
and thus, this shows how \textsf{LocalCut} may be used in this context, when 
some periphery-like seed nodes of the graph are known.

In our second example, we consider a seed vector that represents a feature 
vector on the vertices but that does not have an interpretation in terms of 
cuts. 
In particular, we consider a seed vector that is a normalized version of the 
degree distribution vector. 
Since nodes that are periphery-like tend to have lower degree than those that
are core-like~\cite{LLDM08_communities_CONF,LLDM09_communities_IM,LLM10_communities_CONF},
this choice of seed vector biases \textsf{LocalCut} 
towards cuts that are well-correlated with periphery-like and low-degree 
vertices.
A selection of the cuts found on this seed vector when varying the 
teleportation with $c=2$ is displayed in Figure~\ref{fig:mseeds0b}. 
These cuts partition the network naturally into three well-separated 
regions: a sparser periphery-like region in darker colors, a lighter-colored 
intermediate region, and a white dense core-like region, where higher-degree 
vertices tend to lie. 
Clearly, this approach could be applied more generally to find 
low-conductance cuts that are well-correlated with a known feature of the 
node vector.

\section{Discussion}
\label{sxn:discussion}

In this final section, we provide a brief discussion of our results in a 
broader context.

\paragraph{Relationship to local graph partitioning.} 
Recent theoretical work has focused on using spectral ideas to find good 
clusters nearby an input seed set of 
nodes~\cite{Spielman:2004,andersen06local,chung07_fourproofs}. 
In particular, local graph partitioning---roughly, the problem of finding a 
low-conductance cut in a graph in time depending only on the volume of the 
output cut---was introduced by Spielman and Teng~\cite{Spielman:2004}. 
They used random walk based methods; and they used this as a subroutine to 
give a nearly linear-time algorithm for outputting balanced cuts that match 
the Cheeger Inequality up to polylog factors. 
In our language, a local graph partitioning algorithm would start a random 
walk at a seed node, truncating the walk after a suitably chosen number of 
steps, and outputting the nodes visited by the walk. 
This result was improved  by Andersen, Chung and Lang~\cite{andersen06local} 
by performing a truncated Personalized PageRank computation. 
These and subsequent papers building on them were motivated by local graph 
partitioning~\cite{chung07_fourproofs}, but they do not address the problem 
of discovering cuts near general seed vectors, as do we, or of generalizing 
the second eigenvector of the Laplacian. 
Moreover, these approaches are more operationally-defined, while ours is 
axiomatic and optimization-based.

\paragraph{Relationship to empirical work on community structure.}
Recent empirical work has used Personalized PageRank, a particular variant 
of a local random walk, to characterize very finely the clustering and 
community structure in a wide range of very large social and information 
networks~\cite{andersen06seed,LLDM08_communities_CONF,LLDM09_communities_IM,LLM10_communities_CONF}. 
In particular, Andersen and Lang used local spectral methods to identify 
communities in certain informatics graphs using an input set of nodes as a 
seed set~\cite{andersen06seed}.
Subsequently, Leskovec, Lang, Dasgupta, and Mahoney used related 
methods to characterize the small-scale and large-scale clustering and 
community structure in a wide range of large social and information 
networks~\cite{LLDM08_communities_CONF,LLDM09_communities_IM,LLM10_communities_CONF}. 
Our optimization program and empirical results suggest that this line of 
work can be extended to ask in a theoretically principled manner much more 
refined questions about graph structure near prespecified seed vectors.

\paragraph{Relationship to cut-improvement algorithms.}
Many recently-popular algorithms for finding minimum-conductance cuts, 
such as those in~\cite{khandekar06_partitioning,OSVV08}, use as a crucial 
building block a primitive that takes as input a cut $(T, \bar{T})$ and 
attempts to find a lower-conductance cut that is {\em well correlated} 
with $(T, \bar{T})$. 
This primitive is referred to as a \emph{cut-improvement 
algorithm}~\cite{kevin04mqi,andersen08soda}, as its original purpose was 
limited to post-processing cuts output by other algorithms. 
Recently, cut-improvement algorithms have also been used to find low 
conductance cuts in specific regions of large graphs~\cite{LLM10_communities_CONF}. 
Given a notion of correlation between cuts, cut-improvement algorithms 
typically produce approximation guarantees of the following form:
for any cut $(C, \bar{C})$ that is $\varepsilon $-correlated with the input 
cut, the cut output by the algorithm has conductance upper-bounded by a 
function of the conductance of $(C, \bar{C})$ and $\varepsilon $.
This line of work has typically used flow-based techniques.
For example, Gallo, Grigoriadis and Tarjan~\cite{Gallo:1989} were the first 
to show that one can find a subset of an input set $T \subseteq V$ with 
minimum conductance in polynomial time. 
Similarly, Lang and Rao~\cite{kevin04mqi} implement a closely related 
algorithm and demonstrate its effectiveness at refining cuts output by other 
methods. 
Finally, Andersen and Lang~\cite{andersen08soda} give a more general 
algorithm that uses a small number of single-commodity maximum-flows to find 
low-conductance cuts not only inside the input subset $T$, but among all 
cuts which are well-correlated with $(T, \bar{T})$. 
Viewed from this perspective, our work may be seen as a spectral analogue 
of these flow-based techniques, since Theorem~\ref{thm:improve} provides 
lower bounds on the conductance of other cuts as a function of how 
well-correlated they are with the seed vector.

\paragraph{Alternate interpretation of our main optimization program.}
There are a few interesting ways to view our local optimization problem 
of Figure~\ref{fig:spectral} which would like to point out here. 
Recall that \textsf{LocalSpectral} may be interpreted as augmenting the 
standard spectral optimization program with a constraint that the output 
cut be well-correlated with the input seed set.
To understand this program from the perspective of the dual, recall that 
the dual of \textsf{LocalSpectral} is given by the following.
\begin{eqnarray*}
\label{prog:spectral-local-d1}
                     &\text{maximize} & \alpha + \beta \kappa                    \\
                     &\text{s.t.} & L_{G} \succeq \alpha L_{K_n} + \beta \Omega_T \\
                     &            & \beta \ge 0    ,
\end{eqnarray*}
where $\Omega_T=D_Gs_Ts_T^TD_G$.
Alternatively, by subtracting the second constraint of 
\textsf{LocalSpectral} 
from the first constraint, it follows that
$$
x^T\left(L_{K_n}-L_{K_n}s_Ts_T^TL_{K_n}\right)x \le 1-\kappa  .
$$
It can be shown that
$$
L_{K_n}-L_{K_n}s_Ts_T^TL_{K_n} 
   = \frac{L_{K_{T}}}{\vol(\bar{T})} + \frac{L_{K_{\bar{T}}}}{\vol(T)}  ,
$$
where $L_{K_{T}}$ is the $D_G$-weighted complete graph on the vertex set $T$.
Thus,
\textsf{LocalSpectral}
is clearly equivalent to
\begin{eqnarray*}
\label{prog:spectral-local-p2}
                              &\text{minimize} & x^T  L_{G} x                \\
                              &\text{s.t.} & x^T  L_{K_n} x = 1      \\
                              &            & x^T\left( \frac{L_{K_{T}}}{\vol(\bar{T})} + \frac{L_{K_{\bar{T}}}}{\vol(T)} \right)x \le 1-\kappa   .
\end{eqnarray*}
The dual of 
this program 
is given by the following.
\begin{eqnarray*}
\label{prog:spectral-local-d2A}
                             &\text{maximize} & \alpha - \beta(1-\kappa)   \\
\label{prog:spectral-local-d2B}
                             &\text{s.t.} & L_{G} \succeq \alpha L_{K_n} - \beta\left( \frac{L_{K_{T}}}{\vol(\bar{T})} + \frac{L_{K_{\bar{T}}}}{\vol(T)} \right)  \\
\label{prog:spectral-local-d2C}
                             &            & \beta \ge 0      . 
\end{eqnarray*}
From the perspective of this dual, this 
can be viewed as ``embedding'' a combination of a complete graph $K_n$ and a
weighted combination of complete graphs on the sets $T$ and $\bar{T}$, 
\emph{i.e.}, $K_T$ and $K_{\bar{T}}$.
Depending on the value of $\beta$, the latter terms clearly discourage cuts 
that substantially cut into $T$ or $\bar{T}$, thus encouraging partitions
that are well-correlated with the input cut $(T,\bar{T})$.

\paragraph{Bounding the size of the output cut.}
Readers familiar with the spectral method may recall that given a graph with a small balanced cut, it is not possible, in general, to guarantee that the sweep cut procedure of Theorem~\ref{thm:cheeger2}  applied to the optimal of \textsf{Spectral} outputs  a balanced cut. One may have to iterate several times before one gets a balanced cut.  Our setting, building up on the spectral method,  also suffers from this;  we cannot hope, in general,  to bound the size of the output cut  (which is a sweep cut) in terms of the correlation parameter $\kappa.$ 
This was the reason for considering volume-constrained sweep cuts in our
empirical evaluation.


\end{document}